\newcommand{\caphead}[1]{{\bf #1}}
\renewcommand{\thesection}{\Roman{section}}
\renewcommand{\thesubsection}{\Roman{section} \Alph{subsection}}
\renewcommand{\thesubsubsection}{\Roman{section} \Alph{subsection} \arabic{subsubsection}}
\def\p@subsection{}
\def\p@subsubsection{}
\newtheorem{theorem}{Theorem}
\newtheorem{lemma}{Lemma}
\newtheorem{corollary}{Corollary}
\newtheorem{claim}{Claim}
\newcommand\footnoteref[1]{\protected@xdef\@thefnmark{\ref{#1}}\@footnotemark}
\newcommand{\Bell}{\mathcal{B}}
\newcommand{\tradBell}{\tilde{\Bell}}
\newcommand{\Sing}{\Psi^-}
\newcommand{\even}{{\rm even}}
\newcommand{\odd}{{\rm odd}}
\newcommand{\ent}{{\rm ent}}
\newcommand{\Even}{{\rm even}}
\newcommand{\RPos}{\mathcal{R}_z}  
\def\id{\mathbbm{1}}   
\newcommand{\LParen}{ \bm{(} }
\newcommand{\RParen}{ \bm{)} }
\newcommand{\p}[1]{{\mathbb P}\left(#1\right)}
\newcommand{\cX}{\mathcal{X}}
\newcommand{\cY}{\mathcal{Y}}
\newcommand{\erf}{\mathrm{erf}}
\DeclareMathOperator{\sign}{sgn}
\newcommand*{\bra}[1]{\langle #1\rvert}
\newcommand*{\ket}[1]{\lvert #1 \rangle}
\newcommand*{\braket}[2]{\langle #1 \lvert #2 \rangle}
\newcommand*{\expval}[1]{\left\langle  #1  \right\rangle}
\newcommand{\Var}[1]{\mathrm{Var}\left(#1\right)}
\newcommand{\cov}[2]{\mathrm{Cov}\left(#1,#2\right)}
\newcommand{\E}[1]{{\mathbb E}\left(#1\right)}
\DeclarePairedDelimiter\abs{\lvert}{\rvert}%
\newcommand*{\addFileDependency}[1]{
  \typeout{(#1)}
  \@addtofilelist{#1}
  \IfFileExists{#1}{}{\typeout{No file #1.}}
}
\newcommand*{\myexternaldocument}[1]{%
    \externaldocument{#1}%
    \addFileDependency{#1.tex}%
    \addFileDependency{#1.aux}%
}
\renewcommand\th{ {\rm th} }
\begin{document}
\title{Nonlinear Bell inequality for macroscopic measurements}
\author{Adam~Bene~Watts}
\email{abenewat@mit.edu}
\affiliation{Center for Theoretical Physics, 
Massachusetts Institute of Technology,
Cambridge, Massachusetts 02139, USA}
\author{Nicole~Yunger~Halpern}
\email{nicoleyh@g.harvard.edu}
\affiliation{ITAMP, Harvard-Smithsonian Center for Astrophysics, Cambridge, MA 02138, USA}
\affiliation{Department of Physics, Harvard University, Cambridge, MA 02138, USA}
\affiliation{Research Laboratory of Electronics, Massachusetts Institute of Technology, Cambridge, Massachusetts 02139, USA}
\affiliation{Center for Theoretical Physics, 
Massachusetts Institute of Technology,
Cambridge, Massachusetts 02139, USA}
\author{Aram~Harrow}
\email{aram@mit.edu}
\affiliation{Center for Theoretical Physics, 
Massachusetts Institute of Technology,
Cambridge, Massachusetts 02139, USA}
\date{\today}

%
%
\begin{abstract}
The correspondence principle suggests that quantum systems grow classical when large. Classical systems cannot violate Bell inequalities. 
Yet agents given substantial control can violate
Bell inequalities proven for large-scale systems.
We consider agents who have little control,
implementing only general operations suited to macroscopic experimentalists: 
preparing small-scale entanglement
and measuring macroscopic properties while suffering from noise.
That experimentalists so restricted can violate a Bell inequality
appears unlikely, in light of earlier literature.
Yet we prove a Bell inequality that such an agent can violate,
even if experimental errors have variances that scale as the system size.
A violation implies nonclassicality, given limitations on particles' interactions.  
A product of singlets violates the inequality; experimental tests are feasible for photons, solid-state systems, atoms, and trapped ions. Consistently with known results, violations of our Bell inequality cannot disprove local hidden-variables theories. By rejecting the disproof goal, we show, one can certify nonclassical correlations under reasonable experimental assumptions.
\end{abstract}
\preprint{MIT-CTP/5279}

{\let\newpage\relax\maketitle}

Can large systems exhibit nonclassical behaviors such as entanglement?
The correspondence principle suggests not. 
Yet experiments are pushing the quantum-classical boundary
to larger scales~\cite{Arndt_99_Wave,Leggett_02_Testing,Blencowe_04_Nanomechanical,Gerlich_11_Quantum,Wollman_15_Quantum,Kaltenbaek_16_Macroscopic,bose2017spin}:
Double-slit experiments have revealed interference
of organic molecules' wave functions~\cite{Gerlich_11_Quantum}.
A micron-long mechanical oscillator's quantum state 
has been squeezed~\cite{Wollman_15_Quantum}.
Many-particle systems have given rise to nonlocal correlations~\cite{Yao_09_Quantum,Marinkovic_18_Optomechanical,Schmid_16_Bell}.

Nonlocal correlations are detected with Bell tests.
In a Bell test, systems are prepared, separated, and measured 
in each of many trials.
The outcome statistics may violate a Bell inequality.
If they do, they cannot be modeled with classical physics,
in the absence of loopholes.

Bell inequalities have been proved for settings that involve large scales (e.g.,~\cite{Reid_02_Violation,Mermin_80_Quantum,Drummond_83_Violations,Mermin_90_Extreme,Ardehali_92_Bell,Belinskii_93_Interference,Zukowski_02_Bell's,Collins_02_Bell,Cavalcanti_07_Bell,jeong2009failure,He_10_Bell,Stobinska_11_Bell,He_11_Entanglement,Tura_14_Detecting,Tura_15_Nonlocality,Schmid_16_Bell,Engelsen_17_Bell,Marinkovic_18_Optomechanical,Dalton_19_CGLMP,thenabadu2019testing,navascues2013testing});
see~\cite{Dalton_19_Bell,Reid_12_Entanglement} for reviews
and Supplementary Note~\ref{App_Comparison} for a detailed comparison with our results.
We adopt a different approach, considering which operations
a macroscopic experimentalist can perform easily:
preparing small-scale entanglement 
and measuring large-scale properties, in our model.
Whether such a weak experimentalist can violate a Bell inequality, 
even in the absence of noise, is unclear \emph{a priori}.
Indeed, our experimentalist can violate neither Bell's 1964 inequality~\cite{Bell_64_On,Navascues_10_Glance}, nor any previously proved macroscopic Bell inequality to which our main result does not reduce~\cite{Reid_02_Violation,Mermin_80_Quantum,Drummond_83_Violations,Mermin_90_Extreme,Ardehali_92_Bell,Belinskii_93_Interference,Zukowski_02_Bell's,Collins_02_Bell,Cavalcanti_07_Bell,jeong2009failure,He_10_Bell,Stobinska_11_Bell,He_11_Entanglement,Tura_14_Detecting,Tura_15_Nonlocality,Schmid_16_Bell,Engelsen_17_Bell,Marinkovic_18_Optomechanical,Dalton_19_CGLMP,thenabadu2019testing}.\footnote{
Navascu\'es \emph{et al.} prove a macroscopic Bell inequality that governs a similarly restricted experimentalist~\cite{navascues2013testing}.
However,~\cite{navascues2013testing} does not address noise,
with respect to which our result is robust.
See Supplementary Note~\labelcref{App_Comparison} for a detailed comparison of~\cite{navascues2013testing} with our result.}
Nevertheless, we prove a macroscopic Bell inequality
that can be violated with these operations, even in the presence of noise. 
The key is the macroscopic Bell parameter's nonlinearity
in the probability distributions over measurement outcomes.

Our inequality is violated by macroscopic measurements of, 
e.g., a product of $N > 1$ singlets.
Such a state has been prepared in a wide range of platforms, including
photons~\cite{Barz_10_Heralded}, 
solid-state systems~\cite{Bernien_13_Heralded},
atoms~\cite{Laurat_07_Heralded,Hofmann_12_Heralded},
and trapped ions~\cite{Casabone_13_Heralded}.
A violation of the inequality implies nonlocality if
microscopic subsystems are prepared approximately independently.
Similarly, independence of pairs of particles is assumed in~\cite{Navascues_10_Glance,Yang_11_Quantum,Navascues_16_Macroscopic}, though it may be difficult to guarantee.

This independence requirement prevents violations of our inequality 
from disproving local hidden-variables theories (LHVTs),
as no experimentalist restricted like ours can~\cite{Navascues_10_Glance,Yang_11_Quantum,Navascues_16_Macroscopic}.
By forfeiting the goal of a disproof, we show,
one can certify entanglement under reasonable experimental assumptions.
This certification is device-independent, requiring no knowledge of the state or experimental apparatuses, apart from the aforementioned independence.
Furthermore, our inequality is robust with respect to errors,
including a lack of subsystem independence,
whose variances scale as $N$.
Additionally, with our strategy, similar macroscopic Bell inequalities
can be derived for macroscopic systems 
that satisfy different independence assumptions.

Aside from being easily testable with platforms known to produce Bell pairs,
our inequality can illuminate whether 
poorly characterized systems harbor entanglement.
Such tests pose greater challenges but offer greater potential payoffs.
Possible applications include Posner molecules~\cite{Fisher_15_Quantum,Fisher_18_Quantum,NYH_19_Quantum,Fisher_17_Are}, tabletop experiments that simulate cosmological systems~\cite{Fifer_19_Analog},
and high-intensity beams.

The rest of this paper is organized as follows.
We introduce the setup in Sec.~\ref{sec_Setup}.
Section~\ref{sec_Bell_Ineq} contains the main results:
We present and prove the Bell inequality for macroscopic measurements,
using the covariance formulation of 
a microscopic Bell inequality~\cite{Pozsgay_17_Covariance}.
Section~\ref{sec_Discussion} contains a discussion: 
We compare quantum correlations and global classical correlations
as resources for violating our inequality,
show how to combat experimental noise,
reconcile violations of the inequality with the correspondence principle~\cite{Navascues_10_Glance,Yang_11_Quantum,Navascues_16_Macroscopic},
recast the Bell inequality as a nonlocal game,
discuss a potential application to Posner molecules~\cite{Fisher_15_Quantum,Fisher_18_Quantum,NYH_19_Quantum,Fisher_17_Are},
and detail opportunities.

%
%
%
%
\section{Setup}
\label{sec_Setup}

Consider an experimentalist Alice who has a system $A$
and an experimentalist Bob who has a disjoint system $B$. 
Each system consists of $N$ microscopic subsystems, indexed with $i$.
The $i^\th$ subsystem of $A$ can interact with
the $i^\th$ subsystem of $B$ but with no other subsystems.
Our setup resembles that in~\cite{Navascues_10_Glance}.

Alice can measure her system with settings $x = 0, 1$,
and Bob can measure his system with settings $y = 0, 1$.
Each measurement yields an outcome in $[0,1]$.\footnote{
In the strategies presented explicitly in this paper,
every measurement outcome equals 0 or 1. 
But the macroscopic Bell inequality holds more generally.
}
The experimentalist observes the sum of the microscopic outcomes, 
the value of a \emph{macroscopic random variable}. 
Measuring $A$ with setting $x$ 
yields the macroscopic random variable $A_x$. 
$B_y$ is defined analogously. 

We will often illustrate with two beams of photons.
The polarization of each photon in beam $A$ is entangled with
the polarization of a photon in beam $B$ and vice versa. 
Such beams can be produced through
spontaneous parametric down-conversion (SPDC)~\cite{kwiat1995new}:
A laser beam strikes a nonlinear crystal.
Upon absorbing a photon, the crystal emits two photons
entangled in the polarization domain:
$\frac{1}{ \sqrt{2} } ( \ket{ \text{H}, \text{V} }  +  e^{i \alpha} \ket{ \text{V}, \text{H} } )$.
Horizontal and vertical polarizations are denoted by
$\ket{ \text{H} }$ and $\ket{ \text{V} }$.
The relative phase depends on some $\alpha \in \mathbb{R}$.
The photons enter different beams.
Each experimentalist measures his/her beam
by passing it through a polarizer, 
then measuring the intensity. 
The measurement setting (Alice's $x$ or Bob's $y$) determines the polarizer's angle. 
A photon passing through the polarizer yields a 1 outcome.
The intensity measurement counts the 1s. 
Supplementary Note~\labelcref{app_Feasibility} addresses concerns about the feasibility of realizing our model experimentally. Supplementary Note~\labelcref{app_Noise_Model} details the photon example.

The randomness in the $A_x$'s and $B_y$'s is of three types:
\begin{enumerate}[(i)]

   \item \label{item_Randomness_i}
   \emph{Quantum randomness:}
   If the systems are quantum, outcomes are sampled according to the Born rule
during wave-function collapse.

   \item  \label{item_Randomness_ii}
   \emph{Local classical randomness:}
   Randomness may taint the preparation of each $AB$ pair of subsystems.
In the SPDC example, different photons enter the crystal at different locations.
Suppose that the crystal's birefringence varies over short length scales.
Different photon pairs will acquire different relative phases $e^{i \alpha}$~\cite{kwiat1995new}.

   \item \label{item_Randomness_iii}
   \emph{Global classical randomness:}
   Global parameters that affect all the particle pairs can vary from trial to trial. 
   In the photon example, Alice and Bob can switch on the laser;
   measure their postpolarizer intensities several times, 
   performing several trials, during a time $T$; 
   and then switch the laser off.
   The laser's intensity affects the $A_x$'s and $B_y$'s 
   and may fluctuate from trial to trial.

\end{enumerate}
Quantum randomness and global classical randomness 
can violate our macroscopic Bell inequality.
Assuming a cap on the amount of global classical randomness, 
we conclude that violations imply nonclassicality. 
Local classical randomness can conceal violations 
achievable by quantum systems ideally.
Local classical randomness also produces limited correlations,
which we bound in our macroscopic Bell inequality.
We quantify classical randomness with a noise variable $r$ below.

Systems $A$ and $B$ satisfy two assumptions:
\begin{enumerate}[(a)]
    \item 
    $A$ and $B$ do not interact with each other while being measured. 
    Neither system receives information about the setting with which
    the other system is measured.
    \label{assumption: noninteracting systems}
    
    \item 
    Global classical correlations are limited, 
    as quantified in Ineq.~\eqref{eq_Bound_Var}.
    \label{assumption: independent particles}
    
\end{enumerate}

\noindent
Assumption \labelcref{assumption: noninteracting systems} 
is standard across Bell inequalities. 
In the photon example, the beams satisfy~\ref{assumption: noninteracting systems} if spatially separated while passing through the polarizers and undergoing intensity measurements.

Assumption~\ref{assumption: independent particles} is the usual assumption
that parameters do not fluctuate too much between trials,
due to a separation of time scales.
Consider the photon example in item~\ref{item_Randomness_ii} above.
Let $t$ denote the time required to measure the intensity,
to perform one trial.
The trial time must be much shorter than
the time over which the global parameters drift (e.g., the laser intensity drifts):
$t \gg T$.
The greater the time scales' separation,
the closer the system comes to satisfying 
assumption~\ref{assumption: independent particles}.
Assumption~\labelcref{assumption: independent particles} 
has appeared in other studies of nonclassical correlations in macroscopic systems (e.g.,~\cite{Navascues_10_Glance, Navascues_16_Macroscopic}).

Assumptions~\ref{assumption: noninteracting systems} 
and ~\ref{assumption: independent particles}
are the conditions under which
a Bell inequality is provable for
the operations that a macroscopic experimentalist
is expected to be able to perform:
correlating small systems
and measuring macroscopic observables.\footnote{
Why these operations? Preparing macroscopic entanglement is difficult; 
hence the restriction to microscopic preparation control.
Given microscopic preparation control, 
if the experimentalist could measure microscopic observables,
s/he could test the microscopic Bell inequality;
a macroscopic Bell inequality would be irrelevant.
}
If the experimentalist can perform different operations,
different assumptions will be natural,
and our macroscopic Bell test may be extended (Sec.~\ref{sec_Discussion}).

We fortify our Bell test by allowing for
small global correlations and limited measurement precision.
Both errors are collected in one parameter, defined as follows.
In the absence of errors, $A_x$ and $B_y$
equal ideal random variables $A'_x$ and $B'_y$.
Each ideal variable equals a sum of independent random variables.
We model the discrepancies between ideal and actual with random variables $r$, as in
\begin{align}
   \label{eq_Error_Model}
    A_x = A_x' + r_{A_x} .
\end{align}
Our macroscopic Bell inequality is robust with respect to errors
of bounded variance:
\begin{align}
   \label{eq_Bound_Var}
    \Var{r_{A_x}} \leq \epsilon N ,
\end{align}
wherein $\epsilon > 0$. 
Errors $r_{B_y}$ are defined analogously.
They obey Ineq.~\eqref{eq_Bound_Var} with the same $\epsilon$.
Strategies for mitigating errors are discussed in Sec.~\ref{sec_Discussion}.

Our macroscopic Bell inequality depends on the covariances of 
the $A_x$'s and $B_y$'s.
The covariance of random variables $X$ and $Y$ is defined as
\begin{align}
    \cov{X}{Y} := \mathbb{E} \LParen [X- \E{X}] [Y-\E{Y}] \RParen ,
\end{align}
wherein $\E{X}$ denotes the expectation value of $X$.
One useful combination of covariances, we define as
the \emph{macroscopic Bell parameter}:\footnote{
Calculating $\Bell$ requires knowledge of $N$,
the number of particles in each experimentalist's system.
$N$ might not be measurable precisely. 
But knowing $N$ even to within $\sqrt{N}$ suffices:
Taylor-approximating yields
$\frac{1}{N + \sqrt{N} }  
=  \frac{1}{N} \left( 1 - \frac{1}{ \sqrt{N} } \right)$.
The correction is of size $\frac{1}{ \sqrt{N} }  \ll  1$.
Furthermore, uncertainty about $N$ may be incorporated into a noise model
with which a macroscopic Bell inequality can be derived (Suppl. Note~\ref{app_Noise_Model}).}
\begin{align}
    \label{eq_Bell_Param}
    \Bell(A_0,A_1,B_0,B_1) & := \frac{4}{N}         
    [ \cov{A_{0}}{B_0} + \cov{A_0  }{B_1}
        \nonumber \\  
        & \hspace{0.5cm}
        + \cov{A_1}{B_0} - \cov{A_1}{B_1} ] .
\end{align}


%
%
%
%
\section{Main results}
\label{sec_Bell_Ineq}

We present the nonlinear macroscopic Bell inequality and sketch the proof, 
detailed in Suppl. Note~\ref{app_Proof_Bell_ineq}.
Then, we show how to violate the inequality using quantum systems.

\begin{theorem}[Nonlinear Bell inequality for macroscopic measurements] 
\label{thm_marco_bell_ineq} 
Let systems $A$ and $B$, and measurement settings $x$ and $y$, 
be as in Sec.~\ref{sec_Setup}. 
Assume that the systems are classical. 
The macroscopic random variables satisfy the macroscopic Bell inequality
\begin{align}
   \label{eq_Our_Bell_Ineq}
\Bell(A_0,A_1,B_0,B_1) &\leq 16/7 
+ 16 \epsilon + 32\sqrt{\epsilon} .
\end{align}
\end{theorem}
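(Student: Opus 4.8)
The plan is to reduce the macroscopic inequality to the \emph{microscopic} covariance Bell inequality of Pozsgay \emph{et al.}~\cite{Pozsgay_17_Covariance} and then to absorb the noise through Cauchy--Schwarz estimates. The microscopic inequality bounds, for a single pair of subsystems governed by a local hidden-variable model and emitting outcomes $a_{x,i}, b_{y,i} \in [0,1]$, the covariance combination
\begin{align}
   \label{eq_micro_sketch}
   S_i &:= \cov{a_{0,i}}{b_{0,i}} + \cov{a_{0,i}}{b_{1,i}} \nonumber \\
   &\quad + \cov{a_{1,i}}{b_{0,i}} - \cov{a_{1,i}}{b_{1,i}} \leq \tfrac{4}{7} ,
\end{align}
where the ideal macroscopic variables are the sums $A_x' = \sum_i a_{x,i}$ and $B_y' = \sum_i b_{y,i}$. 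Observe that $\tfrac{16}{7} = 4 \cdot \tfrac{4}{7}$, which is the signature of summing $N$ such per-pair bounds against the $4/N$ prefactor in Eq.~\eqref{eq_Bell_Param}.

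First I would treat the ideal, noiseless case. Because each ideal macroscopic variable is a sum of \emph{independent} subsystem contributions, and because the $i^\th$ subsystem of $A$ correlates only with the $i^\th$ subsystem of $B$, every cross term with $i \neq j$ vanishes, $\cov{a_{x,i}}{b_{y,j}} = 0$. Bilinearity of the covariance then collapses each macroscopic covariance to a single diagonal sum,
\begin{align}
   \cov{A_x'}{B_y'} = \sum_{i,j} \cov{a_{x,i}}{b_{y,j}} = \sum_i \cov{a_{x,i}}{b_{y,i}} .
\end{align}
Substituting into Eq.~\eqref{eq_Bell_Param} and applying Ineq.~\eqref{eq_micro_sketch} to each of the $N$ independent pairs gives
\begin{align}
   \Bell(A_0', A_1', B_0', B_1') = \frac{4}{N}\sum_{i=1}^N S_i \leq \frac{4}{N}\cdot N \cdot \frac{4}{7} = \frac{16}{7} .
\end{align}
The $4/N$ normalization is precisely what keeps this bound order-one despite the growing number of summands.

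Next I would reinstate the noise. Writing $A_x = A_x' + r_{A_x}$ and $B_y = B_y' + r_{B_y}$ and expanding by bilinearity,
\begin{align}
   \cov{A_x}{B_y} &= \cov{A_x'}{B_y'} + \cov{A_x'}{r_{B_y}} \nonumber \\
   &\quad + \cov{r_{A_x}}{B_y'} + \cov{r_{A_x}}{r_{B_y}} .
\end{align}
Each extra term I would bound with Cauchy--Schwarz, $\abs{\cov{X}{Y}} \leq \sqrt{\Var{X}\,\Var{Y}}$, using $\Var{A_x'} \leq N$ (each independent summand obeys $\Var{a_{x,i}} \leq \E{a_{x,i}^2} \leq 1$ since $a_{x,i}\in[0,1]$) and $\Var{r_{A_x}} \leq \epsilon N$ from Ineq.~\eqref{eq_Bound_Var}. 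The two mixed terms each contribute at most $\sqrt{N\cdot \epsilon N} = \sqrt{\epsilon}\,N$ and the pure-noise term at most $\epsilon N$. Summing the absolute deviations over the four covariances in Eq.~\eqref{eq_Bell_Param}, multiplying by $4/N$, and combining with the ideal bound yields
\begin{align}
   \Bell(A_0,A_1,B_0,B_1) &\leq \frac{16}{7} + \frac{16}{N}\Parens{2\sqrt{\epsilon}\,N + \epsilon N} \nonumber \\
   &= \frac{16}{7} + 32\sqrt{\epsilon} + 16\epsilon ,
\end{align}
which is Ineq.~\eqref{eq_Our_Bell_Ineq}.

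I expect the main obstacle to lie in Ineq.~\eqref{eq_micro_sketch}: establishing the tight constant $4/7$ requires optimizing the covariance combination over all local hidden-variable strategies, an optimization that is genuinely \emph{nonlinear} in the response distributions (unlike the linear CHSH functional) and hence not settled by merely enumerating deterministic vertices of the local polytope. I would import this bound from~\cite{Pozsgay_17_Covariance} rather than rederive it. The remaining care is bookkeeping: justifying that the independence of subsystem pairs annihilates the off-diagonal covariances in the ideal case, and tracking the Cauchy--Schwarz estimates so that the global-correlation and measurement errors enter only through the controlled parameter $\epsilon$.
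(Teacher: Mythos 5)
Your proposal is correct and follows essentially the same route as the paper's proof: the ideal case collapses each macroscopic covariance to a diagonal sum via pairwise independence and imports the rescaled $4/7$ covariance-CHSH bound of Pozsgay \emph{et al.}, and the noise is handled by the identical Cauchy--Schwarz expansion with $\Var{A_x'} \leq N$ and $\Var{r_{A_x}} \leq \epsilon N$, yielding the same $16\epsilon + 32\sqrt{\epsilon}$ correction. The only cosmetic difference is that the paper states the rescaling of the Pozsgay bound from $[-1,1]$-valued to $[0,1]$-valued outcomes explicitly in a footnote, which you implicitly assume when quoting the $4/7$ constant.
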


\begin{proof}
Here, we prove the theorem when $\epsilon = 0$,
when the observed macroscopic random variables $A_x$ and $B_y$
equal the ideal $A'_x$ and $B'_y$.
The full proof is similar but requires an error analysis (Suppl. Note~\ref{app_Proof_Bell_ineq}).

Let $a_{x}^{(i)}$ denote the value reported by the $i^\th$ $A$ particle 
after $A$ is measured with setting $x$. 
$A_x'$ and $B_y'$ equal sums of the microscopic variables:
\begin{align} 
   A'_x  =  \sum_{i=1}^N  a_x^{(i)} ,
   \quad \text{and} \quad
   B'_x  =  \sum_{i=1}^N  b_x^{(i)} .
\end{align}
Because $a_0^{(i)}$ and $b_0^{(i)}$ are independent of the other variables,
\begin{align}
    \label{eq_Bound_Help1}
    \cov{A_0'}{B_0'} 
    = \sum_{i = 1}^N  \cov{  a_{0}^{(i)}  }{  b_{0}^{(i)}  } .
\end{align}
Analogous equalities govern the other macroscopic-random-variable covariances.

Let us bound the covariances amongst the $a_x^{(i)}$'s and $b_y^{(i)}$'s.
We use the covariance formulation of 
the Bell-Clauser-Horne-Shimony-Holt (Bell-CHSH) inequality (see~\cite{Clauser_69_Proposed,Pozsgay_17_Covariance}
and Suppl. Note~\ref{app_CHSH_Backgrnd}),\footnote{
In the original statement of Ineq.~\eqref{eq_Cov_Bell0},
the right-hand side (RHS) equals $16/7$.
The reason is, in~\cite{Pozsgay_17_Covariance},
$a_x^{(i)}, b_y^{(i)}  \in  [-1, 1]$.
We assume that each variable $\in [0, 1]$, 
so we deform the original result in two steps.
First, we translate $[-1, 1]$ to $[0, 2]$.
Translations preserve covariances.
Second, we rescale $[0, 2]$ to $[0, 1]$.
The rescaling halves each $a$ and $b$,
quartering products $ab$, the covariances, 
and the $16/7$ in Ineq.~\eqref{eq_Cov_Bell0}.
The resulting $4/7$ is multiplied by a 4 in Ineq.~\eqref{eq_Bound_Help2},
returning to $16/7$.}
\begin{align}
   \label{eq_Cov_Bell0}
   & \cov{a_{0}^{(i)}  }{b_{0}^{(i)}  } + \cov{a_{0}^{(i)}  }{b_{1}^{(i)}  }  
   + \cov{a_{1}^{(i)}  }{b_{0}^{(i)}  } 
   \nonumber \\ & \qquad \qquad \qquad \qquad \qquad \;
   - \cov{a_{1}^{(i)}  }{b_{1}^{(i)}  } 
   \leq 4/7 .
\end{align}
Combining Eq.~\eqref{eq_Bound_Help1} and Ineq.~\eqref{eq_Cov_Bell0} with the definition of $\Bell(A_x', A_y', B_x', B_y')$ [Eq.~\eqref{eq_Bell_Param}] gives
\begin{align}
    \label{eq_Bound_Help2}
    & \Bell(A_0', A_1', B_0', B_1') 
    = \frac{4}{N}   \sum_{i = 1}^N
    \Big[  \cov{a_0^{(i)}}{b_0^{(i)}} 
    \\ & \nonumber \quad + \cov{a_0^{(i)}}{b_1^{(i)}} 
    + \cov{a_1^{(i)}}{b_0^{(i)}} - \cov{a_1^{(i)}}{b_1^{(i)}} \Big] \\
    \label{eq_Bound_Help3}
    & \; \leq 16/7 .
\end{align}
\end{proof}

We now show that a quantum system 
can produce correlations that violate Ineq.~\eqref{eq_Our_Bell_Ineq}.
The system consists of singlets.

\begin{theorem} \label{thm_quantum_violation}
There exist an $N$-particle quantum system 
and a measurement strategy, 
subject to the restrictions in Sec.~\ref{sec_Setup}, 
whose outcome statistics violate 
the nonlinear Bell inequality for macroscopic measurements.
The system and strategy achieve
\begin{align}
\Bell(A_0,A_1,B_0,B_1) = 2\sqrt{2}
\end{align}
in the ideal ($\epsilon = 0$) case and
\begin{align}
\label{eq_Q_Bound_Main}
\Bell(A_0,A_1,B_0,B_1) \geq 2\sqrt{2} - 16 \epsilon - 32 \sqrt{\epsilon}
\end{align}
in the presence of noise bounded as in Ineq.~\eqref{eq_Bound_Var}.
\end{theorem}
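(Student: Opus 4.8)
The plan is to exhibit an explicit state and measurement strategy, verify $\Bell = 2\sqrt{2}$ in the ideal case by reducing to a single entangled pair, and then control the noisy case by treating the errors $r$ as a bounded perturbation. For the state I would take a product of $N$ singlets, the $i^\th$ singlet $\ket{\Sing}$ shared between the $i^\th$ subsystem of $A$ and the $i^\th$ subsystem of $B$; this respects the interaction restriction of Sec.~\ref{sec_Setup}, since pair $i$ of $A$ couples only to pair $i$ of $B$. Each experimentalist applies the \emph{same} single-qubit measurement to every one of their subsystems, with the two settings $x,y \in \{0,1\}$ fixed to the standard CHSH directions that saturate Tsirelson's bound. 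Labelling the spin-up outcome $1$ and spin-down outcome $0$ puts every microscopic outcome in $\{0,1\} \subset [0,1]$, and $A_x$, $B_y$ are the measured sums.

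For the ideal value, independence of the singlets lets me reuse Eq.~\eqref{eq_Bound_Help1} from the proof of Theorem~\ref{thm_marco_bell_ineq}: $\cov{A_x'}{B_y'} = \sum_i \cov{a_x^{(i)}}{b_y^{(i)}}$, so $\Bell$ reduces to $\frac{4}{N} \sum_i$ of the single-pair CHSH covariance combination. I would evaluate one pair by passing to the $\pm 1$-valued spin variables $\tilde a = 2a - 1$: the singlet's uniform marginals annihilate the mean terms, leaving $\cov{a_x^{(i)}}{b_y^{(i)}} = \tfrac14 \expval{\tilde a_x^{(i)} \tilde b_y^{(i)}} = -\tfrac14\, \hat a_x \cdot \hat b_y$. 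The chosen directions make the signed combination of these four correlators equal the Tsirelson value $2\sqrt{2}$, so each pair contributes $\tfrac14 \cdot 2\sqrt2 = \tfrac{\sqrt2}{2}$; summing $N$ identical pairs and applying the $\tfrac4N$ prefactor gives $\Bell = 2\sqrt2$.

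For the noisy value I write $A_x = A_x' + r_{A_x}$ and $B_y = B_y' + r_{B_y}$ and expand by bilinearity,
\begin{align}
\cov{A_x}{B_y} - \cov{A_x'}{B_y'} &= \cov{A_x'}{r_{B_y}} + \cov{r_{A_x}}{B_y'} \nonumber \\ &\quad + \cov{r_{A_x}}{r_{B_y}} .
\end{align}
Each term is bounded with Cauchy--Schwarz, $\abs{\cov{X}{Y}} \le \sqrt{\Var{X}\Var{Y}}$, using $\Var{r_{A_x}}, \Var{r_{B_y}} \le \epsilon N$ from Ineq.~\eqref{eq_Bound_Var} together with $\Var{A_x'}, \Var{B_y'} \le N$ (each is a sum of $N$ independent terms of variance at most $1$). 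The two cross terms are then at most $N\sqrt{\epsilon}$ each and the error--error term at most $\epsilon N$, so each of the four covariances is displaced from its ideal value by at most $2N\sqrt\epsilon + \epsilon N$. After the $\tfrac4N$ prefactor and the four signed terms, the total displacement of $\Bell$ is at most $16\epsilon + 32\sqrt\epsilon$, which combined with the ideal value $2\sqrt2$ yields Ineq.~\eqref{eq_Q_Bound_Main}.

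The ideal computation is routine once the measurement directions are fixed; the delicate part is the noise analysis. The key subtlety is that the errors are arbitrary subject only to their variance caps --- they may be correlated with the ideal variables and with one another --- so the bound must hold adversarially; this is exactly what the Cauchy--Schwarz step guarantees, as it never uses the joint distribution of $(A_x', r_{B_y})$. The remaining care is bookkeeping: one must use the variance bound $\Var{\cdot} \le N$ (rather than the tighter $N/4$ available for $\{0,1\}$-valued sums) uniformly, so that the resulting error window $16\epsilon + 32\sqrt\epsilon$ matches the one appearing on the classical side in Theorem~\ref{thm_marco_bell_ineq}.
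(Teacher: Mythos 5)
Your proposal is correct and follows essentially the same route as the paper: the same product-of-$N$-singlets state with standard CHSH settings, the same reduction of $\Bell$ to a single-pair covariance combination via the independence decomposition of Eq.~\eqref{eq_Bound_Help1}, and the identical Cauchy--Schwarz error analysis yielding the $16\epsilon + 32\sqrt{\epsilon}$ window. The only (immaterial) difference is bookkeeping in the single-pair computation: you rescale to $\pm 1$ spin variables and invoke the singlet correlator $-\hat a_x \cdot \hat b_y$ at the Tsirelson angles, whereas the paper computes the outcome probabilities $\p{a_x {=} b_y {=} 1} = \tfrac12 \sin^2(3\pi/8)$ explicitly; both give the per-pair value $1/\sqrt{2}$.
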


\begin{proof}
As in the proof of \Cref{thm_marco_bell_ineq}, 
we prove the result in the ideal case here.
Supplementary Note~\ref{app_Prove_Q_Violation} contains the error analysis.
Let each of $A$ and $B$ consist of $N$ qubits.
Let the $i^\th$ qubit of $A$ and the $i^\th$ qubit of $B$ form a singlet, for all $i$:
$\ket{ \Sing}  :=  \frac{1}{ \sqrt{2} } ( \ket{01} - \ket{10} )$.
We denote the $1$ and $-1$ eigenstates 
of the Pauli $z$-operator $\sigma_z$
by $\ket{0}$ and $\ket{1}$.
Let $x$ and $y$ be the measurement settings in
the conventional CHSH test (\cite{Clauser_69_Proposed},~reviewed in Suppl. Note~\ref{app_CHSH_Backgrnd}).
If the measurement of a particle yields 1, the particle effectively reports 1;
and if the measurement yields $-1$, the particle reports 0.

Measuring the $i^\th$ particle pair yields outcomes that satisfy
\begin{align}
    \label{eq_Q_Help0}
    \E{a_{0}^{(i)}} = \E{a_{1}^{(i)}} = \E{b_{0}^{(i)}} = \E{b_{1}^{(i)}} = \frac{1}{2} .
\end{align}
As shown in Suppl. Note~\ref{app_Prove_Q_Violation},
\begin{align}
    \label{eq_Q_Help}
    &\E{a_{0}^{(i)}b_{0}^{(i)}} + \E{a_{0}^{(i)}b_{1}^{(i)}}  
    \\ \nonumber
    & \hspace{0.5cm} + \E{a_{1}^{(i)}b_{0}^{(i)}} - \E{a_{1}^{(i)}b_{1}^{(i)}} 
    = 2 \sin^2(3 \pi/8) - \frac{1}{2}.
\end{align}
Combining these two equations yields
\begin{align}
    \label{eq_Q_Help2}
    &\cov{a_{0}^{(i)}}{b_{0}^{(i)}} + \cov{a_{0}^{(i)}}{b_{1}^{(i)}} \\
    &\hspace{0.1cm} + \cov{a_{1}^{(i)}}{b_{0}^{(i)}} 
    - \cov{a_{1}^{(i)}}{b_{1}^{(i)}} 
    = 2 \sin^2(3 \pi/8) - 1. \nonumber \\
    & \hspace{5.3cm} = 1 / \sqrt{2} .
\end{align}
Following the proof of \Cref{thm_marco_bell_ineq}, we compute
\begin{align}
    &\Bell(A_0',A_1',B_0',B_1') \\
    &\hspace{1cm} 
    = \frac{4}{N}\sum_i \Big[ 
    \cov{a_0^{(i)}}{b_0^{(i)}} + \cov{a_0^{(i)}}{b_1^{(i)}} \nonumber \\
    & \hspace{1.5cm} + \cov{a_1^{(i)}}{b_0^{(i)}} 
    - \cov{a_1^{(i)}}{b_1^{(i)}} \Big] \\
    &\hspace{1cm} = 2\sqrt{2} . \label{eq_ideal_quantum_macroscopic}
\end{align}
\end{proof}

\section{Discussion}
\label{sec_Discussion}

Six points merit analysis.
First, we discuss the equivalence of local quantum correlations
and global classical correlations
as resources for violating the macroscopic Bell inequality.
Second, we suggest strategies for mitigating experimental errors.
Third, we reconcile our macroscopic-Bell-inequality violation 
with the principle of macroscopic locality,
which states that macroscopic systems should behave classically~\cite{Navascues_10_Glance,Yang_11_Quantum,Navascues_16_Macroscopic}. 
Fourth, we recast our macroscopic Bell inequality 
in terms of a nonlocal game. 
Fifth, we discuss a potential application to the Posner model of quantum cognition~\cite{Fisher_15_Quantum,Fisher_18_Quantum,NYH_19_Quantum,Fisher_17_Are}.
Sixth, we detail opportunities engendered by this work.

\textbf{Violating the macroscopic Bell inequality with
classical global correlations:}
Violating the inequality~\eqref{eq_Our_Bell_Ineq}
is a quantum information-processing (QI-processing) task.
Entanglement fuels some QI-processing tasks
equivalently to certain classical resources
(e.g.,~\cite{Buhrman_01_Quantum}).
In violating the macroscopic Bell inequality, 
entanglement within independent particle pairs
serves equivalently to global classical correlations.
We prove this claim in Suppl. Note~\ref{sec_violating_nonlin_bell_ineq}.
This result elucidates entanglement's power in QI processing.

\textbf{Two strategies for mitigating experimental imperfections:}
Imperfections generate local classical~\ref{item_Randomness_ii}
and global classical~\ref{item_Randomness_iii}
randomness, discussed in Sec.~\ref{sec_Setup}.
Local classical randomness can conceal 
quantum violations of the macroscopic Bell inequality,
making the macroscopic Bell parameter $\Bell$~\eqref{eq_Bell_Param}
appear smaller than it should.
Global classical randomness can lead classical systems 
to violate the inequality.
These effects can be mitigated in two ways.

First, we can reduce the effects of local classical randomness on $\Bell$
by modeling noise more precisely than in Sec.~\ref{sec_Setup}.
A macroscopic Bell inequality tighter than Ineq.~\eqref{eq_Our_Bell_Ineq}
may be derived.
We illustrate in Suppl. Note~\ref{app_Noise_Model},
with noise that acts on the microscopic random variables
$a_x^{(i)}$ and $b_y^{(i)}$ independently.
Second, we can mitigate global classical randomness
by reinitializing global parameters between trials. 
In the photon example, the laser can be reset between measurements.

%
%
%
\textbf{Reconciliation with the principle of macroscopic locality:}
%
%
Macroscopic locality has been proposed as an axiom
for distinguishing quantum theory from other nonclassical probabilistic theories~\cite{Navascues_10_Glance,Yang_11_Quantum,Navascues_16_Macroscopic}
(see~\cite{reid2001proposal, reid2001new} for a more restrictive proposal).
%
%
Suppose that macroscopic properties 
of $N$ independent quantum particles 
are measured with precision $\sim \sqrt{N}$.
The outcomes are random variables that obey a probability distribution $P$.
A LHVT can account for $P$,
according to the principle of macroscopic locality.

The violation of our macroscopic Bell inequality
would appear to violate the principle of macroscopic locality.
But experimentalists cannot guarantee 
the absence of fluctuating global parameters,
no matter how tightly they control the temperature, laser intensity, etc.
Some unknown global parameter
could underlie the Bell-inequality violation,
due to the inequality's nonlinearity (Suppl. Note~\ref{subsec:Nonlinearity}).
This parameter would be a classical, and so local, hidden variable.
Hence violating our macroscopic Bell inequality does not disprove LHVTs.
Rather, a violation signals nonlocal correlations
under reasonable, if not airtight, assumptions about the experiment
(Sec.~\ref{sec_Setup}).

\textbf{Nonlocal game:}
The macroscopic Bell inequality gives rise to a nonlocal game.
Nonlocal games quantify what quantum resources can achieve
that classical resources cannot.
The CHSH game is based on the Bell-CHSH inequality
(\cite{Clauser_69_Proposed,CHTW04,Preskill_01_Lecture} and Suppl. Note~\ref{app_CHSH_Backgrnd}):
Players Alice and Bob agree on a strategy;
share a resource, which might be classical or quantum;
receive questions $x$ and $y$ from a verifier;
operate on their particles locally;
and reply with answers $a_x$ and $b_y$.
If the questions and answers satisfy 
$x \wedge y = a + b \pmod{2}$, the players win. 
Players given quantum resources can win more often than classical players can.

Our macroscopic game (Suppl. Note~\ref{app_Nonlocal_Game}) 
resembles the CHSH game but differs in several ways:
$N$ Alices and $N$ Bobs play. The verifier aggregates the Alices' and Bobs' responses, but the verifier's detector has limited resolution. The aggregate responses are assessed with a criterion similar to the CHSH win condition. After many rounds of the game, the verifier scores the player's performance. The score involves no averaging over all possible question pairs $xy$. Players who share pairwise entanglement (such that each Alice shares entanglement with only one Bob and vice versa) can score higher than classical players.

\textbf{Toy application to Posner molecules:}
%
Fisher has proposed a mechanism by which
entanglement might enhance coordinated neuron firing~\cite{Fisher_15_Quantum}.
Phosphorus nuclear spins, he argues, can retain coherence for long times
when in Posner molecules Ca$_9$(PO$_4$)$_6$~\cite{Onuma_98_Posner,Ayako_99_Posner,Dey_10_Posner,Treboux_00_Posner,Kanzaki_01_Posner,Yin_03_Posner,Swift_17_Posner}.
(We call Posner molecules ``Posners'' for short.)
He has argued that Posners might share entanglement.
Fisher's work has inspired developments in 
quantum computation~\cite{NYH_19_Quantum,Freedman_18_Quantum},
chemistry~\cite{Fisher_18_Quantum,Swift_17_Posner},
and many-body physics~\cite{Li_19_Measurement,Chan_19_Unitary,Skinner_19_Measurement}.
The experimental characterization of Posners has begun.
If long-term coherence is observed, entanglement in Posners should be tested for.

How could it be? Posners tumble randomly in their room-temperature fluids.
In Fisher's model, Posners can undergo 
the quantum-computational operations detailed in~\cite{NYH_19_Quantum},
not the measurements performed in conventional Bell tests.
Fisher sketched an inspirational start to an entanglement test in~\cite{Fisher_17_Are}.
Concretizing the test as a nonlocal game was proposed in~\cite{NYH_19_Quantum}.
We initiate the concretization in Suppl. Note~\ref{app_Posner}.
Our Posner Bell test requires microscopic control
but proves that Posners can violate a Bell inequality, in principle, 
in Fisher's model.
Observing such a violation would require more experimental effort than 
violating our inequality with photons.
But a Posner violation would signal never-before-seen physics:
entanglement amongst biomolecules.

\textbf{Opportunities:}
This work establishes six avenues of research.
First, violations of our inequality can be observed experimentally.
Potential platforms include photons~\cite{Barz_10_Heralded}, 
solid-state systems~\cite{Bernien_13_Heralded},
atoms~\cite{Laurat_07_Heralded,Hofmann_12_Heralded},
and trapped ions~\cite{Casabone_13_Heralded}.
These systems could be conscripted relatively easily
but are known to generate nonclassical correlations.
More ambitiously, one could test our macroscopic Bell inequality
with systems whose nonclassicality needs characterization.
Examples include the cosmic microwave background (CMB) and Posner molecules.
Detecting entanglement in the CMB faces difficulties:
Some of the modes expected to share entanglement
have such suppressed amplitudes, they cannot be measured~\cite{Martin_17_Obstructions}.
Analogs of cosmological systems, however, can be realized in tabletop experiments~\cite{Fifer_19_Analog}.
Such an experiment's evolution can be paused.
Consider pausing the evolution before, 
or engineering the evolution to avoid, the suppression.
From our Bell test, one might infer about entanglement in the CMB.
A Posner application would require the elimination of microscopic control
from the Bell test in Suppl. Note~\ref{app_Posner}, opportunity two.

Third, our macroscopic Bell inequality may be generalized to systems that violate the independence requirement in Sec.~\ref{sec_Setup}. Examples include squeezed states, as have been realized with, e.g., atomic ensembles and SPDC~\cite{jing2019split,fadel2018spatial}.
The assumptions in Sec.~\ref{sec_Setup} would need to modified
to accommodate the new setup.
If an experimental system violated the new inequality 
while satisfying the appropriate assumptions, 
one could conclude that the system was nonclassical.
We illustrate such a modification and violation 
in Suppl. Note~\ref{app_Noise_Model}, with a photonic system.
Tailoring our results to a high-intensity pump
appears likely to enable experimentalists to witness entanglement
in systems that violate a common coincidence assumption:
Bell tests tend to require low intensities,
so that only one particle reaches each detector per time window~\cite{Larsson_14_Loopholes}.
The coincidence of a particle's arriving at detector $A$ 
and a particle's arriving at detector $B$
implies that these particles should be analyzed jointly.
High-intensity pumps violate the 
one-particle-per-time-window coincidence assumption.
Tailoring Suppl. Note~\ref{app_Noise_Model},
using Gaussian statistics, appears likely to expand Bell tests
to an unexplored, high-intensity regime.

Fourth, which macroscopic Bell parameters $\Bell$ can 
probabilistic theories beyond quantum theory realize?
Other theories can support correlations
unrealizable in quantum theory~\cite{Popescu_94_Quantum,Janotta_14_Generalized}.
These opportunities can help distinguish quantum theory
from alternative physics
while illuminating the quantum-to-classical transition.

Fifth, our macroscopic Bell parameter is nonlinear in
the probabilities of possible measurements' outcomes
(Suppl. Note~\ref{subsec:Nonlinearity}).
We have proved that a nonlinear operation---photodetection---can violate the inequality.
Can Gaussian operations~\cite{Weedbrook_12_Gaussian}?
The answer may illuminate the macroscopic Bell inequality's limits.

Sixth, certain Bell inequalities have applications to \emph{self-testing}~\cite{Supic_19_Self}. A maximal violation of such an inequality
implies that the quantum state had a particular form.
Whether covariance Bell inequalities can be used in self-testing merits investigation.


%
%
\begin{acknowledgments}
  The authors thank Isaac Chuang,  Peter Drummond, Matteo Fadel, Alan Guth, David I. Kaiser, Christopher McNally, Miguel Navascués,  Margaret Reid, Jeffrey H. Shapiro, Robert W. Spekkens, Franco Wong, and John Wright for helpful discussions. 
  We acknowledge inspiration from talks by Matthew Fisher,
  and we thank Elizabeth Crosson for sharing code.
  ABW was funded by NSF grant CCF-1729369.  NYH is
  grateful for an NSF grant for the Institute for Theoretical Atomic,
  Molecular, and Optical Physics at Harvard University and the
  Smithsonian Astrophysical Observatory.  AWH was funded by NSF grants
  CCF-1452616, CCF-1729369, PHY-1818914 and ARO contract
  W911NF-17-1-0433. 
\end{acknowledgments}

\onecolumngrid

\renewcommand{\thesection}{\Alph{section}}
\renewcommand{\thesubsection}{\Alph{section} \arabic{subsection}}
\renewcommand{\thesubsubsection}{\Alph{section} \arabic{subsection} \roman{subsubsection}}

\makeatletter\@addtoreset{equation}{section}
\def\theequation{\thesection\arabic{equation}}

\titleformat{\section}{\centering\bf}{Supplementary Note~\thesection:}{0.3em}{\textbf}

\begin{appendices}

\section{Comparisons with other Bell inequalities}
\label{App_Comparison}

This supplementary note compares the nonlinear Bell inequality for macroscopic measurements (Theorem~\ref{thm_marco_bell_ineq})
with other Bell inequalities, in three ways.
Supplementary Note~\ref{app_Comparison_Philosophy} contrasts our inequality with
a range of other Bell inequalities for large-scale systems.
The most familiar Bell inequality, the CHSH inequality, 
governs microscopic systems.
Since the CHSH inequality is linear,
Supplementary Note~\ref{subsec:Nonlinearity} explains how our Bell inequality is nonlinear.
Supplementary Note~\ref{subsec:Reduction} describes how
our Bell inequality reduces to the microscopic CHSH inequality
when evaluated on microscopic systems 
whose observables have certain eigenvalues.

\subsection{Comparison of the nonlinear Bell inequality for macroscopic measurements \\ with other large-scale Bell inequalities}
\label{app_Comparison_Philosophy}

From a theoretical standpoint, quantum mechanics allows for a wide range of macroscopic Bell tests. As observed in \cite{reid2001proposal,reid2001new}, one can scale up a microscopic Bell test by taking a cue from Schr\"{o}dinger's cat: One begins with a microscopic Bell state.  Each subsystem is entangled with a macroscopic system. The macroscopic systems are then evolved unitarily and are measured.  The result is a Bell test performed with macroscopic measurements on macroscopic entangled systems. This strategy is impractical: Entangling a microscopic system with a particular macroscopic system is difficult, and 
large-scale entanglement decoheres quickly.
Multiple solutions to these challenges have been proposed. 
In some works, many-particle systems or far-apart systems are entangled.
However, the degrees of freedom that violate a Bell inequality are microscopic~\cite{Marinkovic_18_Optomechanical,Yao_09_Quantum}.
Other large-scale Bell inequalities require states (commonly GHZ states) 
that have yet to be prepared at the macroscale \cite{Mermin_90_Extreme,Ardehali_92_Bell,Belinskii_93_Interference,Cavalcanti_07_Bell,Cavalcanti_10_Macroscopically,He_10_Bell} 
or measurements that may be hard to realize \cite{Tura_14_Detecting,Collins_02_Bell,Dalton_19_CGLMP}.
Experimental demonstrations of Bell tests based on the latter approach~\cite{Tura_14_Detecting} have been attempted
but have been limited to witness inequalities.
(A witness inequality relies on a quantum mechanical description of the state 
and correct calibration of the measurement apparatus \cite{Schmid_16_Bell,Engelsen_17_Bell}.)
More-concrete, experiment-centered proposals focus on finding systems that 
can support macroscopic entangled states 
and measurements that violate Bell's original inequality \cite{jeong2009failure,thenabadu2019testing}. 
While technology is approaching the point at which these tests may be feasible, 
they have yet to be demonstrated experimentally.
For other large-scale Bell inequalities,
the amount by which quantum states violate the inequalities 
vanishes (a violation becomes exceptionally unlikely) in the large-system limit~\cite{Mermin_80_Quantum,Reid_02_Violation,Drummond_83_Violations}.

We adopt a fundamentally different approach, inspired by resource theories.
\emph{Resource theories} are simple, general models,
developed in quantum information theory,
for situations in which some operations are easy to perform
and others are impractical~\cite{Chitambar_19_Quantum}.
Resource theories have proven critical for understanding
quantum information, from entanglement to quantum error correction, channels, thermodynamics, and more. 
Our resource-theorety-inspired approach manifests as follows.
Instead of wrangling a system into supporting macroscopic entanglement,
we ask which operations a macroscopic experimentalist can perform fairly easily. 
After characterizing these operations, we derive a Bell inequality 
violable by these operations 
and identify conditions under which a violation implies nonclassicality.
The operations are 
(i) the preparation of 
small-scale entangled states and 
(ii) measurements of coarse-grained properties of 
macroscopic systems.
These conditions limit the experimentalist's ability 
as much as possible while still letting him/her create and measure 
a bipartite macroscopic state that contains some entanglement:
Macroscopic entangled states are difficult to prepare and control,
requiring the limitation (i).
Given the ability to prepare microscopic entanglement,
if the experimentalist could measure microscopic observables,
s/he could violate the microscopic Bell inequality,
obviating the need for a macroscopic Bell inequality.
Hence the limitation (ii).\footnote{
Different operations are ascribed to the classical experimentalists
in the most famous resource theory,
the resource theory for pure bipartite entanglement~\cite{Horodecki_09_Quantum}.
There, experimentalists can perform 
local operations and classical communication (LOCC).
Why do we not restrict our experimentalist to LOCC?
Using LOCC, experimentalists can communicate.
Communication is prohibited during a Bell test.
As the goal differs between the Bell test and the resource-theory agent,
so do the allowable operations differ.}
An experimentalist subject to our restrictions cannot violate 
the large-scale Bell inequalities described in this supplementary note's first paragraph.
A violation of the macroscopic Bell inequality implies nonclassicality
if the system satisfies the conditions in Sec.~\ref{sec_Setup}.
Supplementary Note~\ref{app_Feasibility} elaborates on 
the conditions' experimental feasibility.

Applying a resource-theory approach to Bell tests 
is valuable for two reasons.
First, the approach begins with states and measurements
that are inherently amenable to experimental realization.
The macroscopic Bell test is therefore simple, natural,
and implementable with tools present in many labs
(e.g., Supplementary Note~\ref{app_Noise_Model}).\footnote{
Of course, realizing our Bell test with any particular platform
requires that our abstract theory be tailored to the platform.
The tailoring may require experiment-specific modifications and assumptions.
See Supplementary Note~\ref{app_Noise_Model} and~\cite{Wong_Experiment} for examples.} 
Second, the resource-theory approach highlights 
how surprising our result is:
One should not expect a macroscopic 
experimentalist to violate a Bell test, even in the absence of noise.
For example, a similar experimentalist can perform
quantum computing with liquid-state nuclear magnetic resonance (NMR):
NMR degrees of freedom (nuclear spins) are mostly independent,
and measurements are coarse-grained,
though microscopic entangled states cannot be prepared.
NMR with up to $\approx 12$ spins can be modeled
with a classical local hidden-variables theory~\cite{Menicucci_02_Local}.
That a similarly restricted experimentalist 
can violate a Bell inequality is therefore surprising.
In another example, the restrictions on our state and measurements
have been conjectured to preclude
violations of the ordinary Bell inequality~\cite{Navascues_10_Glance}.
Adopting a resource-theoretic approach, therefore,
leads to a macroscopic Bell inequality 
whose existence is not obvious \emph{a priori}.


A paper related to ours merits close comparison~\cite{navascues2013testing}. In Sec.~6 of~\cite{navascues2013testing}, the authors consider a setup similar to ours, including macroscopic measurements of a quantum system. 
Certain outcomes, they show, cannot be explained by 
a classical model in which 
measuring the microscopic systems yields discrete outcomes.
The inequality they derive has a similar functional form to ours at $\epsilon = 0$, and our inequality reduces to theirs in the special case that $\epsilon = 0$ and that microscopic measurement outcomes take discrete values $0$ or $1$.
Their approach and ours differ in three ways.
First, the initial assumptions differ:
They assume that measurement outcomes are discrete,
whereas we assume microscopic measurements give a value in the interval $[0,1]$.
Second, our macroscopic Bell inequality accommodates
small experimental imperfections:
The $\epsilon$-dependent terms render our inequality 
robust with respect to bounded errors.
We expect that, using our strategy, one can imbue 
the bound in~\cite{navascues2013testing} with similar robustness.
Third,~\cite{navascues2013testing} adopts a foundational approach,
whereas we emphasize experimental platforms.
Reference~\cite{navascues2013testing} occupies a corpus of works
about replacing the postulates of quantum theory.
Quantum theory rests on \emph{ad hoc} rules,
such as the association of physical systems with Hilbert spaces.
Replacing these rules with physically motivated principles
forms a subfield of  the foundations of quantum theory~\cite{Hardy_01_Quantum}.
Such principles include the impossibility of signaling more quickly
than light can travel.
Navascu\'es \emph{et al.} have proposed 
the \emph{principle of macroscopic locality},
the ability of local hidden-variables theories to recapitulate
all macroscopic measurements' outcome statistics~\cite{Navascues_10_Glance,Yang_11_Quantum,navascues2013testing,Navascues_16_Macroscopic}.
In contrast with this foundational approach, 
we focus more on experiments and applications:
Supplementary Note~\ref{app_Noise_Model} details 
how our Bell test can be realized with photonic SPDC, and
Supplementary Note~\ref{app_Posner} sketches a Posner-molecule implementation.
Furthermore, as stated in the introduction, 
our goal is practical: to certify entanglement 
under reasonable experimental assumptions.

Another inequality relevant to macroscopic systems,
and violated in quantum theory,
is the Leggett-Garg inequality~\cite{Leggett_85_Quantum}.
The Leggett-Garg inequality codifies the principle of macroscopic realism,
which consists of two parts:
First, every macroscopic object occupies a definite state.
Second, one can ascertain the state via 
a measurement that fails to disturb the state.
A Leggett-Garg inequality bounds correlations 
between events spread across time.
In contrast, a Bell inequality (such as 
the nonlinear Bell inequality for macroscopic measurements)
bounds correlations between objects spread across space.

\subsection{Nonlinearity of the macroscopic Bell test}
\label{subsec:Nonlinearity}

Theorem~\ref{thm_marco_bell_ineq} is said to be a nonlinear Bell inequality.
Here, we clarify the nonlinearity's nature.

Let $A_0$, $A_1$, $B_0$, and $B_1$ denote random variables
that represent outcomes of measurements of quantum systems.
For $x = 0, 1$, let $p_{A_x}(a)$ denote the probability that $A_x = a$.
Define $p_{B_y}(b)$, for $y = 0, 1$, analogously. 
Let $p_{A_x,B_y}(a,b)$ denote the joint probability that  
$A_x = a$ and $B_y = b$. As shorthand, we denote the set of all such joint distributions
$p := \{ p_{A_x,B_y}(a,b) \}_{x, y = 0, 1}$.
[To ensure consistency with quantum theory,
we assume that some joint distributions do not exist, e.g.,
$p_{A_0,A_1}(a_0, a_1)$.
This technicality does not affect the main discussion.]

Consider measuring quantum observables in each of many trials,
averaging over outcomes, and combining the averages.
The combination is a function of 
the set $p$ of joint distributions over
measurement pairs' possible outcomes.
Examples include the traditional Bell parameter,
\begin{align}
    \label{eq_Trad_Bell}
    \tradBell(p) 
    & := \mathbb{E}_p 
    \left(A_0B_0 + A_0B_1 + A_1B_0 - A_1B_1 \right) \\
    &= \sum_{a,b} p_{A_0, B_0}(a,b) ab  
    + \sum_{a,b} p_{A_0, B_1}(a,b) ab  
    + \sum_{a,b} p_{A_1, B_0}(a,b) ab  
    - \sum_{a,b} p_{A_1, B_1}(a,b) ab .
\end{align}
Critically, the Bell parameter depends on 
each underlying probability distribution linearly:
Let $p'_{A_x}(a)$, $p'_{B_y}(b)$, and $p'_{A_x, B_y} (a, b)$
denote probability distributions defined analogously to
$p_{A_x} (a)$, etc. 
Define the set $p'$ of distributions analogously to $p$. 
Let $w \in [0, 1]$ denote an independent probability weight.
Define the set $q$ as consisting of mixtures of 
the distributions in $p$ and $p'$:
\begin{align}
    \label{eq_Def_q}
    q_{A_x,B_y}(a,b) 
    := w p_{A_x,B_y}(a,b) 
    +  (1 - w) p'_{A_x,B_y}(a,b)   
\end{align}
for all $x,y \in \{0,1\}$. The traditional Bell parameter~\eqref{eq_Trad_Bell} is linear in that
\begin{align}
    \label{eq_TradBell_Linear}
    \tradBell(q) = w \tradBell(p) + (1 - w) \tradBell(p') .
\end{align}
This equation can be verified by direction computation and follows from the expectation value's linearity. 

Like the traditional Bell parameter, 
the macroscopic Bell parameter~\eqref{eq_Bell_Param} 
is a function of the underlying probability distributions:
\begin{align}
    \Bell(p) & := \frac{4}{N}         
     \text{Cov}_p  \left({A_{0}}, {B_0}\right) 
    +\text{Cov}_p \left({A_0  },{B_1}\right)
    + \text{Cov}_p\left({A_1},{B_0}\right) 
    - \text{Cov}_p\left({A_1},{B_1}\right) ] \\
    &= \frac{4}{N} \big[
        \mathbb{E}_p  (A_0 B_0) 
        - \mathbb{E}_p(A_0)  \mathbb{E}_p(B_0) 
        +\mathbb{E}_p  (A_0 B_1) 
        - \mathbb{E}_p(A_0)  \mathbb{E}_p(B_1)  
        \\ \nonumber &\hspace{45pt} 
        +\mathbb{E}_p (A_1 B_0) 
        - \mathbb{E}_p(A_1)  \mathbb{E}_p(B_0)
        -\mathbb{E}_p (A_1B_1) 
        + \mathbb{E}_p(A_1)\mathbb{E}_p(B_1) 
        \big]\\
        \label{eq_Nonlin_Bell_Expand}
        &= \frac{4}{N} \Bigg[ 
        \sum_{a,b}p_{A_0, B_0}(a,b)ab 
        - \sum_a p_{A_0}(a) a \sum_b p_{B_0}(b)b  
        + \sum_{a,b}p_{A_0, B_1 }(a,b)ab 
        - \sum_a p_{A_0 }(a) a \sum_b p_{B_1}(b)b  \\
        &\hspace{35pt} \nonumber
        + \sum_{a,b}p_{A_1, B_0}(a,b)ab 
        - \sum_a p_{A_1}(a) a \sum_b p_{B_0}(b)b  
        - \sum_{a,b}p_{A_1, B_1}(a,b)ab 
        + \sum_a p_{A_1}(a) a \sum_b p_{B_1}(b)b  \Bigg] .
\end{align}
$\Bell$ depends on the underlying probability distributions 
nonlinearly.
The nonlinearity comes from the terms quadratic in $p_{A_x,B_y}$ 
in Eq.~\eqref{eq_Nonlin_Bell_Expand}. 
Define $p'$, $q$, and $w$ as in and around Eq.~\eqref{eq_Def_q}.
In contrast with the traditional Bell parameter 
[Eq.~\eqref{eq_TradBell_Linear}],
\begin{align}
    \Bell(q) \neq  w \Bell(p) + (1 - w) \Bell(p') .
\end{align}
The inequality in \Cref{thm_marco_bell_ineq} is a bound on $\Bell$, 
so we call the result a nonlinear Bell inequality. 

This nonlinearity impacts experiments as follows. 
Suppose that an experimentalist wishes to observe
a violation of the traditional Bell inequality.
The experimentalist performs measurements
over some time interval,
computes the average in the Bell parameter~\eqref{eq_Trad_Bell},
and checks that the value exceeds the classical bound
with the desired certainty.
The measurement outcomes are drawn according to
underlying probability distributions
that may shift over time.
But, if the Bell inequality is violated,
at least some of those distributions were nonclassical;
the system had, at some time, 
nonclassical measurement-outcome statistics.
Consider, in contrast, violating the macroscopic Bell inequality.
The experimentalist must perform enough measurements
that the $\Bell$ estimate violates the inequality with sufficient confidence.
Furthermore, the measurements' outcomes
must be drawn from nearly the same underlying distributions.
(The ``nearly'' comes from the inequality's $\epsilon$ terms.)
This second requirement stems from the nonlinearity of $\Bell$:
Violating the macroscopic Bell inequality with 
a mixture $q = w p + (1 - w) p'$ of distributions
does not guarantee that any particular distribution $p$ from the mixture
violates the inequality.
Relatedly, violating the nonlinear inequality 
cannot falsify hidden variables that vary with time.
However, a nonlinear Bell inequality can falsify theories
that involve time-independent local hidden variables.
Furthermore, a nonlinear inequality can certify entanglement
under realistic experimental assumptions.

\subsection{Relationship with the microscopic Bell inequality}
\label{subsec:Reduction}

The macroscopic Bell parameter $\Bell$ [Eq.~\eqref{eq_Bell_Param}]
reaches $2 \sqrt{2}$ when evaluated on a product of singlets
[Theorem~\ref{thm_quantum_violation}].
So does the traditional Bell parameter $\tradBell$ [Eq.~\eqref{eq_Trad_Bell}]
reach $2 \sqrt{2}$ when evaluated on a singlet.
We explain the parallel here:
$\Bell$ reduces to $\tradBell$ under certain conditions.


The macroscopic Bell parameter $\Bell$ was designed 
to accommodate many-body systems.
However, $\Bell$ can be evaluated on $N = 1$ of 
the microscopic systems described in Sec.~\ref{sec_Setup}.
In this case,\footnote{
$\Bell$ assumes the following value also when evaluated on
a product of identical microscopic systems.
The reason is, each total variance equals
a sum of independent variables' covariances.}
\begin{align}
    \label{eq_Macro_Reduce1}
    4[\cov{a_0}{b_0} + \cov{a_0}{b_1} + \cov{a_1}{b_0} - \cov{a_1}{b_1}] .
\end{align}
The random variables $a_0$, $a_1$, $b_0$ and $b_1$ 
assume values in $[0,1]$. 
Define random variables $a'_0$, $a'_1$, $b'_0$, and $b'_1$
by subtracting $1/2$ from each original variable
and multiplying by 2. For example, $a'_0 := 2(a_0 - 1/2)$.
The subtraction preserves the covariances, 
and the macroscopic Bell parameter~\eqref{eq_Macro_Reduce1} becomes
\begin{align}
    &\cov{2a_0}{2b_0} + \cov{2a_0}{2b_1} 
      + \cov{2a_1}{2b_0} - \cov{2a_1}{2a_1} \\
    \label{eq_Macro_Reduce2}
    &\hspace{40pt}= \cov{a_0'}{b_0'} + \cov{a_0'}{b_1'} + \cov{a_1'}{b_0'} - \cov{a_1'}{a_1'} .
\end{align}

Suppose that each of $a_0'$, $a_1'$, $b_0'$, and $b_1'$ averages to 0. 
This criterion is satisfied by 
the random variables defined by the outcomes of
the traditional Bell test's measurements.
The macroscopic Bell parameter~\eqref{eq_Macro_Reduce2} reduces to
\begin{align}
    &\E{a_0' b_0'} - \E{a_0'}\E{b_0'} 
    + \E{a_0'b_1'} - \E{a_0'}\E{b_1'} \nonumber  \\
    &\hspace{60pt}
    + \E{a_1' b_0'} - \E{a_1'}\E{b_0'}  
    - \E{a_1'b_1'} + \E{a_1'}\E{b_1'} \\
    &\hspace{40pt}
    = \E{a_0' b_0' + a_0' b_1' + a_1' b_0' - a_1' b_1'},
\end{align}
the original Bell parameter. 

The above argument, coupled with Tsirelson's bound~\cite{cirel1980quantum}, 
implies that the macroscopic Bell parameter, $\Bell$, maximizes at $2\sqrt{2}$ 
if each microscopic measurement
(of which the macroscopic measurement is a coarse-graining)
is equally likely to output a 0 or a 1.
This result can be strengthened via a technique in~\cite{Pozsgay_17_Covariance}:
$\Bell$ maximizes, at $2 \sqrt{2}$,
whenever the measurements are performed on 
a quantum system that satisfies the assumptions of Sec.~\ref{sec_Setup}.

\section{Experimental Feasibility} 
\label{app_Feasibility}

The proof of Theorem~\ref{thm_quantum_violation} involves
a product of $N$ independent singlets (and can be extended to $N$ copies of other entangled states).
This state, we proved, (i) satisfies the assumptions in Sec.~\ref{sec_Setup}
and (ii) violates the macroscopic Bell inequality.
Here, we address three possible concerns about preparing 
such a state, or a similar state, experimentally.


First, macroscopically many singlets must be prepared and measured.
One might suppose that the singlets must be stored, for some time,
in a macroscopically large quantum memory. But not all singlets need to be prepared simultaneously. One can produce many singlets in sequence, then measure for some time interval (Supplementary Note~\ref{app_Noise_Model}). This strategy requires no memory.
Moreover, even if singlets are produced simultaneously, they do not need to be stored in a common quantum memory. In Supplementary Note~\ref{app_Posner} we discuss a possible violation of the macroscopic Bell inequality using Posner molecules. 
In that approach, $N$ pairs of Posner molecules act as $N$ independent (small) quantum memories. 
This allows for simultaneous storage of $N$ singlets, without requiring a macroscopically large quantum memory. 


Second, one might question how easily singlets can be prepared.
Singlets have been prepared with photons~\cite{Barz_10_Heralded}, 
solid-state systems~\cite{Bernien_13_Heralded},
atoms~\cite{Laurat_07_Heralded,Hofmann_12_Heralded},
and trapped ions~\cite{Casabone_13_Heralded}.
Furthermore, other states can violate 
the macroscopic Bell inequality~\eqref{eq_Our_Bell_Ineq}.
The measurements would need to be tailored to the state,
and the tailored measurements' experimental feasibility
would need to be checked.
Supplementary Note~\ref{app_Posner} illustrates that 
nonsinglet states can violate the inequality:
We consider performing a macroscopic Bell test with Posner molecules. Under the assumptions in~\cite{Fisher_15_Quantum},
a product of singlets cannot be measured
as described in the proof of \Cref{thm_quantum_violation}. 
We therefore show how to prepare an entangled Posner state
that behaves like a singlet in certain ways.
We then identify feasible (according to the model in~\cite{Fisher_15_Quantum}) measurements that violate the macroscopic Bell inequality.

Third, in Sec.~\ref{sec_Setup}, each pair of particles
is initially assumed to be independent of the other particles.
In natural macroscopic systems, microscopic subsystems tend to interact.
The independence assumption may therefore seem artificial.
The resolution to this quandary has three components.
First, our results accommodate small violations
of the independence assumption,
as quantified with the $A'_x$ and $B'_y$ in Sec.~\ref{sec_Setup}.
Second, even while some microscopic degrees of freedom interact,
other degrees of freedom can remain isolated.
Examples include the polarizations of photons produced by SPDC
(Supplementary Note~\ref{app_Noise_Model})
and the nuclear spins in liquid-state NMR.
Third, the macroscopic Bell inequality may be modified
as described at the end of Sec.~\ref{sec_Discussion},
to govern particles that interact in a limited, known manner.
Supplementary Note~\ref{app_Noise_Model} presents a similar modification:
There, the macroscopic system consists of
a Poisson-distributed number of microscopic systems.
Consider modeling this state with the
fixed-number-of-singlets model in Sec.~\ref{sec_Setup}.
The noise variable $r_{A_x}$ [Eq.~\eqref{eq_Error_Model}] 
would have 
a variance [Eq.~\eqref{eq_Bound_Var}]
large enough to preclude a macroscopic Bell test.
We resolve this problem by incorporating the Poisson distribution
into our assumptions, deriving a new Bell inequality.
If a system violates the new Bell inequality while satisfying the assumptions,
the system is nonclassical.
This resolution will be detailed further
when tailored to an experimental platform in the upcoming experimental collaboration~\cite{Wong_Experiment}.

\section{Example noise model, for spontaneous parametric down-conversion of photons}
\label{app_Noise_Model}

The bounds presented in \Cref{thm_marco_bell_ineq,thm_quantum_violation} are worst-case bounds. They hold for any noise that satisfies the variance bounds of Eq.~\ref{eq_Bound_Var}. However, experimental assumptions can often constrain noise further. 
A noise-specific analysis can lead to bounds that 
separate classical from nonclassical correlations
when the general bounds cannot.
We illustrate with noise that acts on the microscopic random variables independently. 

For concreteness, we analyze errors that occur when photon beams are produced via SPDC (Sec.~\ref{sec_Setup}). We do this for two reasons. First, we hope to demonstrate that a macroscopic Bell test is physically viable. Second, ideas in this analysis may generalize to other physical setups. This section will provide a template for device-specific noise analysis. 
A more sophisticated version of this analysis,
tailored to a platform used in SPDC experiments~\cite{kim2008applications}, 
will appear in an upcoming experimental paper~\cite{Wong_Experiment}.

We begin by reviewing the setup. Photon beams are produced when a laser shines on a nonlinear crystal. The crystal down-converts some fraction of the incident photons:
Upon absorbing one photon, the crystal emits two. 
The two photons travel in different paths,
and their polarizations become maximally entangled.
If this process occurs frequently enough,\footnote{
Coincidence rates of $\approx 10$ per second were reported in~\cite{kwiat1995new}.}
two distinct beams of photons, whose polarizations form Bell states, result.

This process can involve two sources of randomness.
First, imagine placing a perfect-efficiency detector
right next to the crystal, in the path of one of the beams.
The detector's clicking rate is expected to obey a Poisson distribution.
The distribution governs the number of photon pairs 
produced by the crystal per unit time.
This number is a random variable 
whose randomness is global and classical, 
or is of type~\ref{item_Randomness_iii} (Sec.~\ref{sec_Setup}).
The Poisson randomness could be removed via heralded photon pairs~\cite{Barz_10_Heralded,pittman2003heralded,wagenknecht2010experimental,wang2020dipole}, though we do not pursue heralding here.
Second, photons can be lost between
being produced in the crystal and being measured by 
Alice's or Bob's post-polarization detector.
Dust on the polarizer could absorb a photon, for example,
or the detector could have subunit efficiency.
This randomness is local and classical, or of type~\ref{item_Randomness_ii}.
The randomness prevents the photons from 
being in a product of singlets.
For example, the Poisson distribution makes the state mixed.
We therefore cannot directly apply the proof of Theorem~\ref{thm_quantum_violation},
to conclude that the photons will violate Theorem~\ref{thm_marco_bell_ineq}.
Rather, we model the randomness of both types.
Using the model, we prove tighter analogs of \Cref{thm_marco_bell_ineq,thm_quantum_violation}.
If an experimental system violates the tighter macroscopic Bell inequality
while obeying the assumptions in the model,
the system is nonclassical.

\subsection{Model}
\label{app_Photon_Model}

Our first step in tightening the classical bound (\Cref{thm_marco_bell_ineq}) is to identify the most general form that the macroscopic random variables $A_0$, $A_1$, $B_0$, and $B_1$ can assume. 
Let $M$ be the number of photons in the laser beam that strikes the crystal (per unit time). We call these ``incident photons.'' Assume that each incident photon down-converts with probability $\lambda$, independently of the other possible down-conversions.\footnote{
We are assuming that the possible down-conversions are independent. This assumption can be approximately satisfied if the time scale over which global parameters change $\gg$ the down-conversion time scale. Small deviations from this assumption can be accommodated with the bound in Supplementary Note~\ref{app_Proof_Bell_ineq}.} 
The total number of photon pairs produced is represented by a random variable
$\sum_{i=1}^M  e_i$.
The $e_i$'s are independent Bernoulli random variables,
each with mean $\lambda$. 
Down-conversion is improbable~\cite{Bock_16_Highly}, 
so $\lambda$ is small: $\lambda \ll 1$.

The random variable $a_x^{(i)}$ describes the value that would be reported 
if the $i^\th$ incident photon participated in a down-conversion event
and the resultant photon in Alice's beam were measured with measurement setting $x$. 
The random variable $b_y^{(i)}$ is defined analogously. 
The $i^\th$ possible down-conversion event can add a photon to Alice's beam.
Suppose that Alice measures with setting $x$.
Whether that photon is lost before detection
is represented by $l_{a,x}^{(i)}$. 
This random variable equals $0$ if the photon is lost and equals $1$ otherwise. 
The random variable $l_{b,y}^{(i)}$ is defined analogously.
These variables govern the macroscopic random variables:
\begin{align}
    A_x = \sum_{i=1}^M e_i a_x^{(i)} l_{a,x}^{(i)} \, ,
    \quad \text{and} \quad
    B_y = \sum_{i=1}^M e_i b_y^{(i)} l_{b,y}^{(i)} .
\end{align}

How the total particle number, $N$, should be defined is ambiguous.
Several possibilities suggest themselves. We choose a definition that leads to strong bounds: In the ideal quantum experiment in Sec.~\ref{sec_Bell_Ineq}, each microscopic random variable has a probability $1/2$ of reporting 1 and a probability $1/2$ of reporting 0. Hence 
$\E{A_0} = \E{A_1} = \E{B_0} = \E{B_1} = N/2$. 
We turn this observation into a definition:
\begin{align}
    N = \E{A_0} + \E{B_0} . \label{eq_N_alt_def} 
\end{align}

Proceeding from definitions to bounds, we compute 
the macroscopic random variables' covariances. 
For all $x, y \in \{0, 1\}$,
\begin{align}
    \cov{A_x}{B_y} 
    &= \sum_{i=1}^M
    \cov{e_i a_x^{(i)} l_{a,x}^{(i)}}{e_i b_y^{(i)} l_{b,y}^{(i)}} \\
    &= \sum_{i=1}^M  \left[
    \E{e_i  a_x^{(i)} l_{a,x}^{(i)} b_y^{(i)} l_{b,y}^{(i)}} 
    -\E{e_i  a_x^{(i)} l_{a,x}^{(i)}}\E{e_i b_y^{(i)} l_{b,y}^{(i)}}  \right] \\
    &= \sum_{i=1}^M   \left[
    \lambda \E{a_x^{(i)} l_{a,x}^{(i)} b_y^{(i)} l_{b,y}^{(i)}} 
    - \lambda^2 \E{a_x^{(i)} l_{a,x}^{(i)}}\E{b_y^{(i)} l_{b,y}^{(i)}}  \right] \\
    &= \lambda  \sum_{i=1}^M  
    \E{a_x^{(i)} l_{a,x}^{(i)} b_y^{(i)} l_{b,y}^{(i)}}
    + O( M \lambda^2)   .
    \label{eq_small_lambda_bound}
\end{align}
The macroscopic random variables have averages of the form
\begin{align}
\E{A_x} = \sum_{i=1}^M \E{ e_i a_x^{(i)} l_{a,x}^{(i)}} 
= \lambda\sum_{i=1}^M \E{ a_x^{(i)} l_{a,x}^{(i)}}.
\end{align} 
Substituting into Eq.~\eqref{eq_Bell_Param}, we form the macroscopic Bell parameter:
\begin{align}
    \label{eq_Bell_Ineq_Noise_App}
    \Bell(A_0,A_1,B_0,B_1) 
    = \frac{4 \sum_{i=1}^M \left[
    \lambda \E{a_0^{(i)} 
    l_{a,0}^{(i)} b_0^{(i)} l_{b,0}^{(i)}
    +  a_0^{(i)} l_{a,0}^{(i)} b_1^{(i)} l_{b,1}^{(i)}
    +  a_1^{(i)} l_{a,1}^{(i)} b_0^{(i)} l_{b,0}^{(i)}
    -  a_1^{(i)} l_{a,1}^{(i)} b_1^{(i)} l_{b,1}^{(i)}}\right] 
    + O( M \lambda^2)  }{
    \lambda  \sum_{i=1}^M 
    \E{ a_0^{(i)} l_{a,0}^{(i)}   +  b_0^{(i)} l_{b,0}^{(i)} }  } .
\end{align}
We will bound the RHS under the assumption that the microscopic variables are classical. Then, we will show that the bound can be violated with a quantum state.

\subsection{Classical bound}
 
We bound the numerator of Ineq.~\eqref{eq_Bell_Ineq_Noise_App}
using a general inequality.
For any variables $a_0, a_1, b_0, b_1  \in  \{0,1\}$,
\begin{align} 
   \label{eq_Class_Num_Help}
   a_0 b_0 + a_0 b_1 + a_1 b_0 - a_1 b_1 
   \leq a_0 + b_0.
\end{align} 
Let $a_x = a_x^{(i)}  l_{a,x}^{(i)}$ and
$b_y = b_y^{(i)}  l_{b, y}^{(i)}$.
Applying Ineq.~\eqref{eq_Class_Num_Help} to the numerator of
Ineq.~\eqref{eq_Bell_Ineq_Noise_App} yields
\begin{align}
\E{a_0^{(i)} l_{a,0}^{(i)} b_0^{(i)} l_{b,0}^{(i)}
+  a_0^{(i)} l_{a,0}^{(i)} b_1^{(i)} l_{b,1}^{(i)}
+  a_1^{(i)} l_{a,1}^{(i)} b_0^{(i)} l_{b,0}^{(i)}
-  a_1^{(i)} l_{a,1}^{(i)} b_1^{(i)} l_{b,1}^{(i)}} 
\leq \E{ a_0^{(i)} l_{a,0}^{(i)}  +  b_0^{(i)} l_{b,0}^{(i)} } .
\end{align}
We substitute into the numerator in Ineq.~\eqref{eq_Bell_Ineq_Noise_App},
then cancel the $\E{ \ldots }$ in the denominator.
The classical macroscopic random variables satisfy\footnote{
The correction in Ineq.~\eqref{eq_classical_downconversion} is small when
Alice and Bob measure as dictated in 
the ``Quantum violation of the classical bound'' subsection:
The correction decomposes as
$M \lambda^2 / N  =  \left( \frac{M \lambda}{N}  \right)  \lambda$.
The final $\lambda \ll 1$ by assumption.
$M$ denotes the number of photons in the laser beam that hits the crystal,
$\lambda$ denotes the probability that a given laser-beam photon down-converts,
and $N$ denotes the number of photons in Alice or Bob's beam. Hence $N \approx \lambda M$. }
\begin{align} 
    \label{eq_classical_downconversion}
    \Bell (A_0,A_1,B_0,B_1) \leq 4 + O(M\lambda^2/N).
\end{align}

We can understand this result as follows:
The randomness in $N$ serves as noise.
It raises the macroscopic Bell bound~\eqref{eq_classical_downconversion}
above the main-text macroscopic Bell bound~\eqref{eq_Our_Bell_Ineq}
and even above the quantum bound~\eqref{eq_Q_Bound_Main}.
In this setting, however, a quantum bound 
lies above the classical~\eqref{eq_classical_downconversion}.
If a system violates Ineq.~\eqref{eq_classical_downconversion}
and satisfies the assumptions in Supplementary Note~\ref{app_Photon_Model},
the system is nonclassical.

\subsection{Quantum violation of the classical bound} 

We can relax our assumptions, because experiments will replace this calculation. Once experimentalists observe covariances that violate Ineq.~\eqref{eq_Our_Bell_Ineq} or Ineq.~\eqref{eq_classical_downconversion}, they can conclude that the particles are nonclassical, if the global correlations are small enough
to be unlikely to have caused the violation. The experimentalists need not worry about precisely why the violation occurred.

We therefore simplify by assuming that 
$l_{a,0}$, $l_{a,1}$, $l_{b,0}$, and $l_{b,1}$ 
obey Bernoulli distributions with the same mean, $\gamma$. 
The macroscopic Bell parameter becomes 
\begin{align} 
        \label{eq_Bell_Param_w_assumptions}
        \Bell(A_0,A_1,B_0,B_1) 
        = \frac{4 \sum_{i=1}^M \left[
        \gamma  \E{a_0^{(i)} b_0^{(i)} +a_0^{(i)} b_1^{(i)} +a_1^{(i)}  b_0^{(i)} -a_1^{(i)}  b_1^{(i)}} 
        + O( \lambda)  \right] }{
        \sum_{i=1}^M \E{ a_0^{(i)} + b_0^{(i)} }} .
\end{align}
If the experimentalists follow the quantum strategy in Sec.~\ref{sec_Bell_Ineq}, the microscopic random variables satisfy [Eq.~\eqref{eq_Q_Help}] 
\begin{align}
\E{a_0^{(i)} b_0^{(i)} +a_0^{(i)} b_1^{(i)} +a_1^{(i)}  b_0^{(i)} -a_1^{(i)}  b_1^{(i)}} 
= 2 \sin^2(3 \pi/8) - 1/2
\end{align}
and [Eq.~\eqref{eq_Q_Help0}]
\begin{align}
\E{ a_x^{(i)}  + b_x^{(i)}  } = 1 .
\end{align}
We substitute into the numerator and denominator of Eq.~\eqref{eq_Bell_Param_w_assumptions}. 
The quantum strategy achieves a macroscopic Bell parameter of
\begin{align}
    \Bell (A_0,A_1,B_0,B_1) 
    =  2 \gamma   [ 4 \sin^2 (3 \pi / 8) - 1]  +  O (\lambda) .
\end{align}
A quantum system can violate the classical bound~\eqref{eq_classical_downconversion} if 
\begin{align}
    \label{eq_Gamma_Condn}
    \gamma > \frac{2  +  O(M\lambda^2/N) }{  
                             4 \sin^2 (3 \pi / 8)  -  1  +  O (\lambda) }
    \approx  0.828 + O (M\lambda^2/N).
\end{align}
Photons can violate the noise-specific macroscopic Bell inequality~\eqref{eq_classical_downconversion} if
$\gtrsim 83\%$ of the photon pairs created arrive at the detectors. 

A similar condition arises in the standard Bell test: 
The standard test suffers from a \emph{detection loophole} if the detector misses too many incident photons. As with the detection loophole, a Bell test remains possible here even if too many photons are lost [even if the system disobeys Ineq.~\eqref{eq_Gamma_Condn}]. Formulating the Bell test would require a more-detailed noise model.

%
%
%
%
%
\section{Proof of nonlinear Bell inequality for macroscopic measurements}
\label{app_Proof_Bell_ineq}

We now prove \Cref{thm_marco_bell_ineq} in full generality,
building on the proof in Sec.~\ref{sec_Bell_Ineq}.
The added analysis introduces robustness against 
weak global classical correlations 
[randomness of type~\ref{item_Randomness_iii}, according to Sec.~\ref{sec_Setup}].

\begin{proof}

First, we review notation.
Second, we bound the observed Bell correlator $\Bell(A_0,A_1,B_0,B_1)$,
using (i) the ideal Bell correlator $\Bell(A'_0,A'_1,B'_0,B'_1)$
and (ii) the bound~\eqref{eq_Bound_Var} on global correlations.

Recall the definitions given in Sec.~\ref{sec_Bell_Ineq}.
$A_0$, $A_1$, $B_0$, and $B_1$ represent the macroscopic random variables observed by the experimentalists. 
$A_0'$, $A_1'$, $B_0'$, and $B_1'$ represent the random variables 
that the experimentalists would measure 
if all global parameters were fixed to their ideal values.
In the photon example, the laser's intensity, the laser-crystal alignment, etc.
would remain constant across trials.
Equation~\eqref{eq_Error_Model} relates
the measured variables to the ideals
via the error variable $r$.
Inequality~\eqref{eq_Bound_Var} bounds the error's variance.

We aim to bound the observed correlator $\Bell(A_0,A_1,B_0,B_1)$ 
in terms of the ideal correlator $\Bell(A_0',A_1',B_0',B_1')$ and $\epsilon$. Algebraic manipulation gives
\begin{align}
   \label{eq_Noisy_Cov1}
   \cov{A_x}{B_y}  & =
   \cov{A_x' + r_{A_x}}{B_y' + r_{B_y}}  \\
   \label{eq_Noisy_Cov1b}
   & = \cov{A_x'}{B_y'} + \cov{A_x'}{  r_{B_y}  } + \cov{  r_{A_x}  }{B_y'} 
   + \cov{r_{A_x}}{r_{B_y}} .
\end{align}
Random variables $X$ and $Y$ have a covariance $\cov{X}{Y}$
bounded in terms of the variables' variances:
Let $\mathcal{X}  :=  X - \expval{X}$ and
$\mathcal{Y}  :=  Y  -  \expval{Y}$.
The original variables have the covariance
\begin{align}
   \cov{X}{Y}
   & =  \E{  \mathcal{X}  \mathcal{Y}  } 
   \leq  \sqrt{  \E{ \mathcal{X}^2 }  \E{ \mathcal{Y}^2 }  } 
   \label{eq_Cov_Var}
   =  \sqrt{  \Var{X}  \Var{Y}  } .
\end{align}
The bound follows from the Cauchy-Schwarz inequality.
We apply Ineq.~\eqref{eq_Cov_Var} to each of
the final three covariances in Eq.~\eqref{eq_Noisy_Cov1b}:
\begin{align}
\abs{\cov{A_x}{B_y} - \cov{A_x'}{B_y'}} 
& = \Big| \cov{A_x'}{r_{B_y}} + \cov{  r_{A_x}  }{B_y'} 
 + \cov{r_{A_x}}{r_{B_y}} \Big| \\ 
& \leq \sqrt{  \Var{A_x'}  \Var{r_{B_y}}  } + \sqrt{\Var{r_{A_x}}\Var{B_y'}} 
+ \sqrt{\Var{r_{A_x}}\Var{r_{B_y}}} \\
& \leq (\epsilon + 2\sqrt{\epsilon})N \label{eq_Noisy_Cov3}.
\end{align}
The final inequality follows from Ineq.~\eqref{eq_Bound_Var} and
$\Var{A_x'} \leq N$. 
This latter inequality holds because $A_x'$ equals 
a sum of $N$ independent terms $a_x^{(i)}$.
Each $a_x^{(i)} \in \{0,1\}$ and so has variance $\leq 1$. 
We combine Ineq.~\eqref{eq_Noisy_Cov3} with Eq.~\eqref{eq_Bell_Param}
and the triangle inequality to conclude that
\begin{align}
   \label{eq_Error_Result1}
   \abs{ \Bell(A_0,A_1,B_0,B_1)  -  \Bell(A_0', A_1', B_0', B_1')} 
   \leq 16\epsilon + 32\sqrt{\epsilon}.
\end{align}

According to the sketch of the proof of \Cref{thm_marco_bell_ineq} 
[Ineq.~\eqref{eq_Bound_Help3}],
\begin{align}
    \label{eq_Bound_Help3_App}
    \Bell(A_0', A_1', B_0', B_1') \leq 16/7.
\end{align}
Combing Ineqs.~\eqref{eq_Error_Result1} and~\eqref{eq_Bound_Help3_App} gives
\begin{align}
   \label{eq_Bound_Help4_App}
    \Bell(A_0, A_1, B_0, B_1) \leq 16/7 + 16\epsilon + 32\sqrt{\epsilon},
\end{align}
the desired result. 
\end{proof}

\section{Background: CHSH Game}
\label{app_CHSH_Backgrnd}

Before describing the CHSH game, we establish a more general framework for two-party nonlocal games. Nonlocal games illustrate how players given quantum resources can outperform players given only classical resources.
A two-party nonlocal game involves two players, Alice and Bob. They share some resource---typically, classical shared randomness or a quantum state. 
They cannot communicate after agreeing on the strategy they will follow.
The game begins when a verifier sends Alice a question, or symbol, $x$, and sends Bob a question $y$.
Using only the questions and possibly measurements of the shared resource, the players respond with symbols $a$ and $b$. 
The verifier substitutes $x$, $y$, $a$, and $b$ into a function.
If the function's value satisfies some predetermined criterion, the players win the game.

Every nonlocal game has a list of winning response pairs $ab$ 
for every question pair $xy$. 
The players aim to maximize their probability of responding with a winning $ab$, knowing the winning response lists and the distribution from which the questions are drawn. The maximal win probability is called the game's \emph{value}. 

The CHSH game is described as follows in this language.
Questions $x$ and $y$ are drawn from $\{ 0 , 1 \}$. 
Winning responses are $a, b \in \{0,1\}$ such that 
\begin{align}
   \label{eq_CHSH_Win_Criterion}
   x \wedge y = a + b \pmod{2} .
\end{align}
The $\wedge$ denotes the logical AND.
Table~\ref{table_CHSH} summarizes the winning response pairs.

\begin{table}[ht]
    \centering
    \begin{tabular}{c | c | c}
   $y \backslash x$ & 0         & 1 \\ \hline
    0                   & 00 11     & 00 11 \\
    1                   & 00 11     & 01 10 \\
    \end{tabular}
    \caption{\caphead{Diagrammatic specification of the CHSH game:} Each column corresponds to one possible value of Alice's question, $x$, and each row corresponds to one possible value of Bob's question, $y$. Each cell contains the winning response pairs $ab$.}
    \label{table_CHSH}
\end{table}

The CHSH game illustrates the separation between 
what players can achieve when sharing only classical resources
and what players can achieve when sharing entanglement.
Suppose that $x$ and $y$ are selected uniformly randomly.
Players given only classical resources have a probability $\leq 3/4$ 
of winning a random round.
Players who measure a shared entangled state
have a probability $\sin^2 (3 \pi / 8)  \approx  0.854$.
Both facts are proved below. 

\begin{theorem} \label{thm: max classical value for CHSH}
A classical strategy based on shared randomness can win the CHSH game with probability at most $3/4$.
\end{theorem}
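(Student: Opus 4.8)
The plan is to prove that the classical value of the CHSH game is at most $3/4$ by invoking the standard principle that, for games with shared classical randomness, the optimal value is achieved by a deterministic strategy. First I would argue this reduction: any shared-randomness strategy is a convex combination (an average over the shared random string $\lambda$) of deterministic strategies, in which Alice's answer $a$ is a fixed function $a(x,\lambda)$ of her question and the shared randomness, and Bob's answer $b$ is a fixed function $b(y,\lambda)$. Since the overall win probability is the expectation over $\lambda$ of the win probability of the deterministic strategy indexed by $\lambda$, it cannot exceed the maximum over deterministic strategies. Hence it suffices to bound the win probability of an arbitrary \emph{deterministic} strategy.

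Next I would enumerate the deterministic strategies. Once $\lambda$ is fixed, Alice's strategy is specified by the pair of bits $\bigl(a(0), a(1)\bigr)$ and Bob's by $\bigl(b(0), b(1)\bigr)$, giving $16$ deterministic strategies in total. The four questions pairs $(x,y) \in \{0,1\}^2$ are each drawn with probability $1/4$, and for each the win condition is $x \wedge y = a(x) + b(y) \pmod 2$, as in Ineq.~\eqref{eq_CHSH_Win_Criterion}. The key observation is that the win conditions require
\begin{align}
   a(0) + b(0) &= 0, \quad a(0) + b(1) = 0, \nonumber \\
   a(1) + b(0) &= 0, \quad a(1) + b(1) = 1 \pmod 2.
\end{align}
Adding all four equations modulo $2$ gives $0 = 1 \pmod 2$, a contradiction, so no deterministic strategy can satisfy all four simultaneously. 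Therefore any deterministic strategy wins at most three of the four equally likely question pairs, giving a win probability of at most $3/4$.

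Finally I would combine the two steps: the best deterministic strategy wins with probability at most $3/4$, and since every shared-randomness strategy is an average of deterministic strategies, its win probability is also at most $3/4$. One can exhibit a strategy attaining exactly $3/4$ (for instance, all players always answer $0$, winning on the three question pairs with $x \wedge y = 0$) to show the bound is tight, though tightness is not required by the statement.

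I expect the main obstacle to be stating the reduction to deterministic strategies cleanly rather than any hard computation: the arithmetic over the $16$ strategies is routine, but one must be careful to justify that shared randomness confers no advantage by correctly framing the strategy as a probability distribution over deterministic response functions and applying linearity of expectation. The parity (modulo-$2$ sum) argument is the conceptual heart and replaces brute-force case-checking.
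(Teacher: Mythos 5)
Your proposal is correct and follows essentially the same route as the paper: a convexity/averaging argument reducing shared-randomness strategies to deterministic ones, followed by showing the four parity constraints $a(x)+b(y) = x\wedge y \pmod 2$ are inconsistent, so any deterministic strategy loses at least one of the four equally likely question pairs. The only difference is cosmetic—you spell out the $\mathbb{F}_2$ inconsistency by summing the four equations, whereas the paper simply cites ``linear algebra over $\mathbb{F}_2$.''
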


\begin{proof}
Classical players can achieve the value (the optimal win probability) 
with a deterministic strategy. 
We prove this claim with a fairly standard minimax argument:
Let $\omega$ denote the game's value.
Assume that some randomness-based strategy achieves $\omega$.
Let $r$ denote the random seed. By assumption,

\begin{align}
    \mathbb{E}_r \left( \mathbb{E}_{a,b} 
    \left( \sum_{a, b \, : \, x \wedge y = a + b} P(a, b  \,  |  \, x, y, r) \right) \right)
    = \omega .
\end{align}
$P(a, b  \, |  \, x, y, r)$ denotes the probability that the players respond with $a$ and $b$, conditioned on the questions $x$ and $y$ and on the random seed $r$. 
Some value $r_0$ of $r$ maximizes the inner expectation value,
by the average's convexity.
Fixing $r = r_0$ results in a deterministic strategy that achieves the game's value, $\omega$, as claimed.

Restricted to deterministic strategies, the players have few options.
Given a question $i$, a player must respond with some fixed output. 
Define $a_i$ as Alice's response to the question $i$,
and define $b_i$ as Bob's response to $i$.
In the CHSH game, the winning responses satisfy $a_i,b_i \in \{0,1\}$, 
\begin{align}
a_0 + b_0 &= 0 \pmod{2}, \\ \nonumber
a_1 + b_0 &= 0 \pmod{2}, \\ \nonumber
a_0 + b_1 &= 0 \pmod{2}, \quad \text{and} \\ \nonumber
a_1 + b_1 &= 1 \pmod{2} .
\end{align} 
Linear algebra over $\mathbb{F}_2$ shows that these equations cannot all be satisfied simultaneously. Hence any deterministic classical strategy must lose on at least one of the four question pairs. Such a strategy wins a random round with probability $\leq 3/4$.  
\end{proof}
\noindent
Constructing a deterministic classical strategy that achieves a win probability of $3/4$ is straightforward~\cite{Preskill_01_Lecture,CHTW04}.
The construction shows that the CHSH game's classical value is 3/4. 

Next, we construct a quantum strategy that has a superclassical probability $>3/4$ of winning the CHSH game. Our presentation is nonstandard but will prove useful later. We begin by reviewing notation and facts about maximally entangled two-qubit states.

Let $\ket{\Sing} 
:= \frac{1}{\sqrt{2}}\left(\ket{01} - \ket{10}\right)$ 
denote the singlet and 
$\ket{\Sing(\theta)} 
:= \frac{1}{\sqrt{2}}\left(\ket{01} - e^{i \theta} \ket{10}\right)$.
We denote the operator that rotates one qubit 
about the $z$-axis through an angle $\theta$ by
$R_z\left(\theta \right) :=  e^{ -i \theta \sigma_z / 2}$.  

\begin{lemma}
Rotations compose as
\begin{enumerate}
    \item $[  R_z(\theta_1) \otimes R_z( -\theta_2)  ]  \ket{\Sing} 
    =  (\text{phase}) \ket{\Sing(\theta_1 + \theta_2)}.$ \label{eq:ThetasAdd}
\end{enumerate}
Consider preparing a pure two-qubit state $\ket{\psi}$,
then measuring $\sigma_x \otimes \sigma_x$.
A classical two-bit string results.
Let $P_{XX}\left(S | \psi \right)$ denote the string's probability
of being in the set $S$.
If $S_\even := \{00, 11\}$ and $S_\odd := \{01, 10\}$,
\begin{enumerate}[resume]

    \item $P_{XX}(S_\even \; | \Sing ) = 0$, 
    \label{eq:XXevens}
    
    \item $P_{XX}\LParen S_\even \; | \Sing(\pi) \RParen = 1$,
    \quad and
    \label{eq:XXodd}
    
    \item $P_{XX}\LParen S_\even \; | \Sing(\theta) \RParen = \sin^2(\theta/2)$. 
    \label{eq:XXarbitrary}
    
\end{enumerate}
\end{lemma}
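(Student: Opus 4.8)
The plan is to dispatch the composition law (item~\ref{eq:ThetasAdd}) by direct action on the computational basis, and then to obtain all three probability claims (items~\ref{eq:XXevens}--\ref{eq:XXarbitrary}) simultaneously by reducing $P_{XX}(S_\even \mid \cdot)$ to a single expectation value. Items~\ref{eq:XXevens} and~\ref{eq:XXodd} will then fall out as the special cases $\theta = 0$ and $\theta = \pi$ of item~\ref{eq:XXarbitrary}, so I would really only prove two things.

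For item~\ref{eq:ThetasAdd}, I would use that $\ket{0}$ and $\ket{1}$ are the $\pm 1$ eigenstates of $\sigma_z$, so that $R_z(\theta)\ket{0} = e^{-i\theta/2}\ket{0}$ and $R_z(\theta)\ket{1} = e^{+i\theta/2}\ket{1}$. Applying $R_z(\theta_1)\otimes R_z(-\theta_2)$ termwise multiplies $\ket{01}$ by $e^{-i(\theta_1+\theta_2)/2}$ and $\ket{10}$ by $e^{+i(\theta_1+\theta_2)/2}$. Factoring out the common phase $e^{-i(\theta_1+\theta_2)/2}$ leaves exactly $\frac{1}{\sqrt 2}\LParen \ket{01} - e^{i(\theta_1+\theta_2)}\ket{10} \RParen = \ket{\Sing(\theta_1+\theta_2)}$, which is the claim, with the explicit phase being $e^{-i(\theta_1+\theta_2)/2}$.

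For the probability claims the key observation is that a $\sigma_x \otimes \sigma_x$ measurement returns an even-parity string ($S_\even$) precisely on the $+1$ eigenspace of the product observable $\sigma_x \otimes \sigma_x$, so that
\begin{align}
   P_{XX}(S_\even \mid \psi) = \frac{1 + \bra{\psi}\, \sigma_x \otimes \sigma_x \,\ket{\psi}}{2} .
\end{align}
I would then evaluate the expectation directly: since $\sigma_x$ swaps $\ket{0} \leftrightarrow \ket{1}$, one has $\sigma_x \otimes \sigma_x \ket{\Sing(\theta)} = \frac{1}{\sqrt 2}\LParen \ket{10} - e^{i\theta}\ket{01} \RParen$, and taking the inner product with $\ket{\Sing(\theta)}$ collapses to $\frac{1}{2}\LParen -e^{i\theta} - e^{-i\theta} \RParen = -\cos\theta$. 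Substituting yields $P_{XX}\LParen S_\even \mid \Sing(\theta) \RParen = \frac{1-\cos\theta}{2} = \sin^2(\theta/2)$, which is item~\ref{eq:XXarbitrary}; setting $\theta = 0$ and $\theta = \pi$ recovers items~\ref{eq:XXevens} and~\ref{eq:XXodd}.

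I expect no genuine difficulty here; the only point needing care is the bookkeeping that identifies the even-parity outcomes with the $+1$ eigenspace of $\sigma_x \otimes \sigma_x$, i.e.\ that reporting bit $0$ (resp.\ $1$) corresponds to the $\sigma_x$ eigenvalue $+1$ (resp.\ $-1$), consistently with the convention used to assign the $P_{XX}$ labels. Once that identification is fixed, the parity-to-expectation reduction makes items~\ref{eq:XXevens}--\ref{eq:XXarbitrary} immediate and, pleasantly, avoids having to expand $\ket{\Sing(\theta)}$ in the $\sigma_x$ eigenbasis at all.
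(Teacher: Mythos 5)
Your proof is correct, but it takes a genuinely different route from the paper's. The paper verifies items~\ref{eq:ThetasAdd}--\ref{eq:XXodd} by (unspecified) direct calculation and then derives item~\ref{eq:XXarbitrary} \emph{from} items~\ref{eq:XXevens} and~\ref{eq:XXodd}: it decomposes $\ket{\Sing(\theta)} = \alpha \ket{\Sing} + \beta \ket{\Sing(\pi)}$ in the orthogonal pair $\{\ket{\Sing}, \ket{\Sing(\pi)}\}$, notes that the first component supports only odd-parity outcomes and the second only even-parity ones, and so reads off $P_{XX}\LParen S_\even \; | \Sing(\theta)\RParen$ as a squared overlap, $\tfrac14 \abs{1 - e^{i\theta}}^2 = \sin^2(\theta/2)$. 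You invert this logic: you prove item~\ref{eq:XXarbitrary} first, via the projector identity
\begin{align}
   P_{XX}(S_\even \mid \psi)
   = \bra{\psi} \tfrac{1}{2}\left( \id + \sigma_x \otimes \sigma_x \right) \ket{\psi}
   = \frac{1 + \bra{\psi}\, \sigma_x \otimes \sigma_x \,\ket{\psi}}{2} ,
\end{align}
and a one-line computation of $\expval{\sigma_x \otimes \sigma_x} = -\cos\theta$, then recover items~\ref{eq:XXevens} and~\ref{eq:XXodd} as the special cases $\theta = 0, \pi$. Your route buys three things: a single computation replaces three separate verifications; no expansion in the $\sigma_x$ eigenbasis (nor any appeal to the special cases) is needed; and you obtain the explicit phase $e^{-i(\theta_1+\theta_2)/2}$ in item~\ref{eq:ThetasAdd} rather than an unnamed one. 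It also sidesteps a bookkeeping slip present in the paper's displayed derivation, which labels the relevant overlap $\abs{\alpha}^2 = \abs{\braket{\Sing}{\Sing(\theta)}}^2$ but then computes $\tfrac14\abs{1 - e^{i\theta}}^2$; the even-parity weight is actually $\abs{\beta}^2 = \abs{\braket{\Sing(\pi)}{\Sing(\theta)}}^2$ (the overlap with $\ket{\Sing}$ is $\cos^2(\theta/2)$). One small simplification to your writeup: the convention caveat you flag at the end is moot, because parity of the output string is insensitive to which $\sigma_x$ eigenvalue is assigned to which bit---either assignment sends equal eigenvalues to equal bits, so $S_\even$ always corresponds to the $+1$ eigenspace of $\sigma_x \otimes \sigma_x$.
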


\begin{proof}
Identities \ref{eq:ThetasAdd}--\ref{eq:XXodd} can be verified by direct calculation. 
To prove identity \ref{eq:XXarbitrary}, we note that $\ket{\Sing(\theta)}$ equals a linear combination of $\ket{\Sing}$ and $\ket{\Sing(\pi)}$. Furthermore, 
$\braket{\Sing}{\Sing(\pi)} =0$. Hence 
$\ket{\Sing(\theta)} = \alpha\ket{\Sing} + \beta\ket{\Sing(\pi)}$, and
\begin{align}
    P_{XX} \LParen S_\even \; | \Sing(\theta) \RParen 
    = \abs{\alpha}^2 
    &= \abs{\braket{\Sing}{\Sing(\theta)}}^2 \\
    &= \frac{1}{4}\abs{1 - \exp(i\theta)}^2\\
    &= \frac{1}{2}\left[ 1 - \cos(\theta)\right] = \sin^2(\theta/2).
\end{align}
\end{proof}

These facts underlie a strategy for winning the CHSH game, 
using quantum resources,
with a greater probability than is achievable with only classical resources.
The quantum strategy consists of the following steps:
\begin{enumerate}
    \item Alice and Bob prepare $\ket{\Sing}$, and each player takes one qubit. The players agree on how each will generate a response, given any possible question.
    \item Upon receiving question $i$, Alice rotates her qubit with $R_z(\theta_i)$. Upon receiving question $i$, Bob rotates his qubit with $R_z(-\theta_i)$. The rotation angle $\theta_i$ depends on the question and the strategy.
    \item Each player measures his/her qubit's $\sigma_x$. The outcome is sent to the verifier as a response.
\end{enumerate}
We now identify angles $\theta_i$ that lead to a superclassical probability 
of winning the CHSH game.

\begin{lemma}
A quantum strategy with rotation angles 
$\theta_{0} =  -3\pi / 8$ and  $\theta_{1}  =  9\pi /8$
wins the CHSH game with probability $\sin^2 (3\pi/8) \approx 0.854$.
\end{lemma}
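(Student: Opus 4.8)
The plan is to track how the shared singlet evolves under the players' rotations and then read off each question pair's win probability from identity~\ref{eq:XXarbitrary}. First I would note that, on questions $(x,y)$, Alice applies $R_z(\theta_x)$ while Bob applies $R_z(-\theta_y)$; identity~\ref{eq:ThetasAdd} then gives that the post-rotation state equals $\ket{\Sing(\theta_x + \theta_y)}$ up to an irrelevant global phase. Since both players subsequently measure $\sigma_x$, this phase drops out, and identity~\ref{eq:XXarbitrary} yields $P_{XX}\LParen S_\even \; | \; \Sing(\theta_x + \theta_y) \RParen = \sin^2\big( (\theta_x + \theta_y)/2 \big)$ for the probability that the reported bits have even parity. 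The complementary odd-parity probability is $\cos^2\big( (\theta_x + \theta_y)/2 \big)$.

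Next I would consult the CHSH win table (Table~\ref{table_CHSH}): for $(x,y) \in \{00, 01, 10\}$ the players win on even-parity responses, whereas for $(x,y) = 11$ they win on odd-parity responses. So I would evaluate $\theta_x + \theta_y$ in each case using $\theta_0 = -3\pi/8$ and $\theta_1 = 9\pi/8$. The three even-parity cases yield $\theta_x + \theta_y = \pm 3\pi/4$, so each wins with probability $\sin^2(3\pi/8)$. The odd-parity case $(1,1)$ yields $\theta_1 + \theta_1 = 9\pi/4$, so its win probability is $\cos^2(9\pi/8)$.

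The only step requiring care---and the closest thing to an obstacle---is confirming that the $(1,1)$ case matches the other three. I would reduce $\cos^2(9\pi/8) = \cos^2(\pi/8) = \sin^2(\pi/2 - \pi/8) = \sin^2(3\pi/8)$ using periodicity and the complementary-angle identity, so that every question pair wins with the identical probability $\sin^2(3\pi/8)$. Because this probability is a single constant independent of the input, it equals the overall win probability $\sin^2(3\pi/8) \approx 0.854$ for any question distribution, completing the proof.
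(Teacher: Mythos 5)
Your proposal is correct and follows essentially the same route as the paper's own proof: evolve the singlet to $\ket{\Sing(\theta_x+\theta_y)}$ via the composition identity, read off the even-parity probability $\sin^2\bigl((\theta_x+\theta_y)/2\bigr)$ from the measurement identity, and check the four cases against the CHSH win table, with the $(1,1)$ case reducing to $\sin^2(3\pi/8)$ exactly as the paper computes $1-\sin^2(9\pi/8)$. Your closing observation that the constant per-question win probability makes the result independent of the question distribution is a minor (and valid) strengthening of the paper's uniform-average statement.
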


\begin{proof}
We verify the claim computationally. 
Upon receiving the question pair $00$, the players win with a probability 
\begin{align}
   P_{XX}\LParen S_\even \; | \Sing(3 \pi / 4)  \RParen
   =  \sin^2(- 3 \pi / 8) . \label{eq_CHSH_win00}
\end{align}
Upon receiving 01 or 10, the players win with a probability
\begin{align}
   P_{XX}\LParen S_\even \; | \Sing (3 \pi / 4)  \RParen
   =  \sin^2( 3 \pi / 8) . \label{eq_CHSH_win01_10}
\end{align}
Finally, upon receiving 11, the players win with a probability
\begin{align} 
   P_{XX} \LParen  S_\odd \; | \Sing( 9 \pi / 4 )  \RParen 
   &= 1- P_{XX} \LParen  S_\even  \; | \Sing( 9 \pi / 4 ) \RParen \\
   &= 1 - \sin^2 ( 9 \pi / 8 )  \\
   &= \sin^2 (3 \pi / 8). \label{eq_CHSH_win11}
\end{align}
Hence the players have a total win probability,
averaged over the possible question pairs, of $\sin^2 (3 \pi / 8)$.
\end{proof}
\noindent
A wide range of rotation angles can achieve superclassical win probabilities.
For example, $\theta_0 = \pi/2$ and $\theta_1 = 3\pi/4$ lead to a win probability of $\approx 0.802$.

\section{Details: Quantum violation of the nonlinear Bell inequality for macroscopic measurements}
\label{app_Prove_Q_Violation}

Here, we complete the proof of Theorem~\ref{thm_quantum_violation}.
In Supplementary Note~\ref{app_Eq_In_Q_Violation}, we prove Eq.~\eqref{eq_Q_Help}.
Supplementary Note~\ref{app_Q_Violate_Error} shows that the proof in the main text
is robust with respect to small experimental imperfections.

\subsection{Proof of Eq.~\eqref{eq_Q_Help}}
\label{app_Eq_In_Q_Violation}

Let $\ket{\Sing}  :=  \frac{1}{ \sqrt{2} } ( \ket{01} - \ket{10} )$ denote the singlet.
We simplify notation by omitting superscripts from the microscopic responses,
e.g., $a_x^{(i)}$.
Recall that $a_x, b_y  \in  \{0, 1\}$.
Consequently, for any $x,y \in \{0,1\}$, each expectation value $\E{a_x b_y}$ 
contains only one nonzero term,
the term in which $a_{x}  =  b_{y}  =  1$. 
These equalities are satisfied when Alice's $i^\th$ microscopic system 
and Bob's $i^\th$ microscopic system output 1s.
Consequently, $\E{ a_{x} b_{y} }$ equals the probability that 
$a_{x} = b_{y} = 1$:
\begin{align}
   \label{eq_Prob_Agree_11}
   \E{ a_{x} b_{y} }  =  \p{ a_{x} {=} b_{y} {=} 1 } .
\end{align}

We calculate $\E{ a_x b_y}$ when Alice and Bob follow the CHSH strategy outlined in Supplementary Note~\ref{app_CHSH_Backgrnd}. The probability that Alice and Bob send responses $11$, given questions $xy$, is 
\begin{align}
    \label{eq_Thm2_App_1}
    \p{ a_{x} {=} b_{y} {=} 1 } 
    = \abs{\bra{--} R_z(\theta_x + \theta_y) \ket{ \Sing }}^2.
\end{align}
Since $ZZ\ket{ \Sing} = - \ket{\Sing}$,
\begin{align}
   \label{eq_Thm2_App_2}
   \abs{\bra{--} R_z(\theta_x + \theta_y) \ket{ \Sing }}^2
   & = \abs{\bra{--}  R_z(\theta_x + \theta_y)  \left(ZZ\right)  \ket{ \Sing }}^2 \\
   & = \abs{\bra{--} \left(ZZ\right) R_z(\theta_x + \theta_y) \ket{ \Sing }}^2 \\
   \label{eq_Thm2_App_3}
   & = \abs{\bra{++} R_z(\theta_x + \theta_y) \ket{ \Sing }}^2 .
\end{align}
Hence the LHS of Eq.~\eqref{eq_Thm2_App_1} decomposes in terms of
Eq.~\eqref{eq_Thm2_App_3} and the LHS of Eq.~\eqref{eq_Thm2_App_2}:
\begin{align}
   \label{eq_Thm2_App_4}
   \p{ a_{x} {=} b_{y} {=} 1 }
   & = \frac{1}{2}\left[
          \abs{\bra{++} R_z(\theta_x + \theta_y) \ket{ \Sing }}^2 
         +  \abs{\bra{--} R_z(\theta_x + \theta_y) \ket{ \Sing }}^2\right]  \\
    &= \frac{1}{2} P_{XX}\LParen S_\even \; | \Sing(\theta_x + \theta_y)  \RParen.
\end{align}
Substituting in from Eq.~\eqref{eq_CHSH_win00} gives
\begin{align}
\p{ a_0 {=} b_0 {=} 1 }
&= \frac{1}{2}P_{XX}\LParen S_\even \; | \Sing(3 \pi / 4)  \RParen \\
   \label{eq_Thm2_App_5}
&= \frac{1}{2}\sin^2(3\pi/8).
\end{align}
Similarly, by Eq.~\eqref{eq_CHSH_win01_10},
\begin{align}
\p{ a_1 {=} b_1 {=} 1 }
= \p{ a_0 {=} b_1 {=} 1 }
&= \frac{1}{2}P_{XX}\LParen S_\even \; | \Sing(3 \pi / 4)  \RParen \\
&= \frac{1}{2}\sin^2(3\pi/8).
   \label{eq_Thm2_App_6}
\end{align}
Finally, Eq.~\eqref{eq_CHSH_win11} implies that
\begin{align}
\p{ a_1 {=} b_1 {=} 1 }
&= \frac{1}{2} P_{XX}\LParen S_\even \; | \Sing(9 \pi / 4)\RParen \\
&= \frac{1}{2}\sin^2(9 \pi/8) \\
&= \frac{1}{2} - \frac{1}{2} \sin^2(3\pi/8).
   \label{eq_Thm2_App_7}
\end{align}
Combining Equations~\eqref{eq_Thm2_App_5},~\eqref{eq_Thm2_App_6},
and~\eqref{eq_Thm2_App_7} with Eq.~\eqref{eq_Prob_Agree_11} yields 
\begin{align}
\E{a_0b_0} + \E{a_1b_0}+\E{a_0b_1}-\E{a_1b_1} 
&= \p{a_0{=}b_0{=}1} + \p{a_1{=}b_0{=}1}
     +\p{a_0 {=} b_1 {=} 1}-\p{a_1 {=} b_1 {=} 1} \\
&= 2 \sin^2(3\pi/8) - \frac{1}{2}.
\end{align}

\subsection{Analysis of experimental error in quantum violation of the macroscopic Bell inequality}
\label{app_Q_Violate_Error}

In the sketch of the proof of Theorem~\ref{thm_quantum_violation}, we showed that
[Eq.~\eqref{eq_ideal_quantum_macroscopic}]
\begin{align}
\Bell(A_0',A_1',B_0',B_1') = 2\sqrt{2}  .
\label{eq_ideal_quantum_macroscopic2}
\end{align}
The macroscopic random variables $A_0'$, $A_1'$, $B_0'$, $B_1'$ were produced by noise-free measurements of perfectly prepared Bell states. 

Noise can taint the setup, as discussed in Sec.~\ref{sec_Setup}. To recap, we define 
\begin{align}
    A_x = A_x' + r_{A_x} .
\end{align}
The random variable $r_{A_x}$ represents noise whose variance is bounded:
$\Var{r_{A_x}} \leq \epsilon N$.
$A_x$ represents the macroscopic outcome of a measurement
made in the presence of noise.
$B_y$ and $r_{B_y}$ are defined analogously. 

In Supplementary Note~\ref{app_Proof_Bell_ineq}, we showed that [Eq.~\eqref{eq_Error_Result1}]
\begin{align}
   \label{eq_Error_Result1_Again}
   \abs{ \Bell(A_0,A_1,B_0,B_1)  -  \Bell(A_0', A_1', B_0', B_1')} 
   \leq 16\epsilon - 32\sqrt{\epsilon}.
\end{align}
Rearranging gives 
\begin{align}
    \Bell(A_0,A_1,B_0,B_1)  
    \geq   \Bell(A_0', A_1', B_0', B_1') - 16\epsilon - 32\sqrt{\epsilon} .
\end{align}
Substituting in from Eq.~\eqref{eq_ideal_quantum_macroscopic2} gives 
\begin{align}
\Bell(A_0,A_1,B_0,B_1) \geq 2\sqrt{2} - 16\epsilon - 32\sqrt{\epsilon} ,
\end{align}
the desired result.

\section{Equivalence of local quantum correlations and global classical correlations
as resources for violating the macroscopic Bell inequality} 
\label{sec_violating_nonlin_bell_ineq}


We formalize the discussion in Sec.~\ref{sec_Discussion} with a theorem.
To state the theorem cleanly and to avoid confusion with $A_x$ and $B_y$, we introduce experimentalists Carol and Dan. 
Each has a system of $N$ particles.
Carol measures with settings $x = 0, 1$,
and Dan measures with settings $y = 0, 1$.
The macroscopic outcomes are the values of random variables 
$C_x$ and $D_y$. 

Like Alice and Bob, Carol and Dan obey assumption~\ref{assumption: noninteracting systems} in Sec.~\ref{sec_Setup}.
But Carol and Dan's systems can share global correlations,
violating assumption~\ref{assumption: independent particles}.
We assume that Carol and Dan's measurements
suffer from no other errors.

\begin{theorem}
\label{thm_uncorrelated_classical_strat}
Carol and Dan can, with $2N \gg 1$ particles, produce correlations that satisfy 
\begin{align}
    \Bell(C_0, C_1, D_0, D_1) = 2N.
\end{align}
\end{theorem}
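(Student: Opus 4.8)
The plan is to exhibit an explicit shared-randomness strategy for Carol and Dan and verify directly that it yields $\Bell(C_0,C_1,D_0,D_1) = 2N$. First I would have Carol and Dan share, before any measurement, a single uniformly distributed random bit $\lambda \in \{0,1\}$. Each of Carol's $N$ particles then reports the value $\lambda$, independently of her setting $x$, and likewise each of Dan's $N$ particles reports $\lambda$, independently of his setting $y$. Because every microscopic outcome equals $\lambda$, the macroscopic random variables collapse to $C_x = D_y = N\lambda$ for all $x,y \in \{0,1\}$.

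The computation is then immediate. Since all four macroscopic variables coincide with $N\lambda$, every covariance appearing in the Bell parameter is the same:
\begin{align}
  \cov{C_x}{D_y} = N^2 \, \Var{\lambda} = N^2\left(\tfrac{1}{2} - \tfrac{1}{4}\right) = \frac{N^2}{4},
\end{align}
using $\E{\lambda} = 1/2$ and $\E{\lambda^2} = 1/2$. Substituting into the definition~\eqref{eq_Bell_Param}, the three positively signed covariances and the single negatively signed one combine as $(3-1)\tfrac{N^2}{4} = \tfrac{N^2}{2}$, so that
\begin{align}
  \Bell(C_0,C_1,D_0,D_1) = \frac{4}{N}\cdot\frac{N^2}{2} = 2N,
\end{align}
as claimed.

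Finally I would check that the strategy is admissible. Every outcome lies in $\{0,1\} \subset [0,1]$, and assumption~\ref{assumption: noninteracting systems} holds because neither party's reported values depend on the other party's setting and $\lambda$ is fixed in advance rather than communicated during measurement; only assumption~\ref{assumption: independent particles} is (deliberately) violated, since the particle pairs are now globally correlated through $\lambda$. There is no real technical obstacle here---the content is conceptual rather than computational. The key realization is that perfect global correlation forces each macroscopic covariance to scale as $N^2$, and that this $N^2$ survives the $4/N$ normalization to leave an $O(N)$ value that dwarfs the constant $16/7$ classical bound of \Cref{thm_marco_bell_ineq}. The only point requiring care is that the minus sign on the final covariance is harmless precisely because all four covariances are engineered to be equal and positive, so the CHSH sign structure plays no role in the value.
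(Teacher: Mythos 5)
Your proposal is correct and is essentially identical to the paper's own proof: the shared uniform bit $\lambda$ is exactly the paper's unbiased coin flip, with all particles outputting $\lambda$ regardless of setting, and the covariance computation $\cov{C_x}{D_y} = N^2/4$ and the resulting $\Bell = \frac{4}{N}\cdot\frac{N^2}{2} = 2N$ match the paper line for line.
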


\begin{proof}
Carol and Dan can implement a probabilistic strategy,
flipping an unbiased coin.
If the coin falls heads-up, they fix their particles to output 1s, 
regardless of measurement settings.
If the coin falls tails-up, all particles are fixed to output 0s. 
A straightforward calculation gives 
\begin{align}
\cov{C_0}{D_0} 
&= \frac{ N^2 }{2} - \left(\frac{N}{2} \right)^2 
= \frac{ N^2 }{ 4 } .
\end{align}
Similar calculations describe the other covariances, so
\begin{align}
    \Bell(C_0, C_1, D_0, D_1) 
    &= \frac{4}{N}\left(\frac{3N^2}{4} - \frac{N^2}{4}\right) 
    = 2N .
\end{align}
\end{proof}

\section{Macroscopic CHSH game}
\label{app_Nonlocal_Game}

We develop a macroscopic analog of the CHSH game, 
reviewed in Supplementary Note~\ref{app_CHSH_Backgrnd}. 
Just as the CHSH game is built on the Bell-CHSH inequality, 
the macroscopic CHSH game is built on 
the macroscopic Bell inequality proven in \Cref{thm_marco_bell_ineq}.  
The macroscopic CHSH game differs from the microscopic CHSH game in three ways:
\begin{enumerate}

   \item
   The macroscopic game is multiplayer. It involves $2N$ players, $N$ Alices and $N$ Bobs, who cannot communicate with each other. Each receives a question and responds. However, the verifier aggregates the Alices' responses and aggregates the Bobs' responses. We also place an important restriction on the players. We assume each Alice plays the game independently from all players except one Bob, and vice versa. This means each Alice can share randomness or entanglement with at most one Bob.
   The game's Alices play the same role as the microscopic particles in the main text. In the main text's photon-beam example, photons serve as the game's Alices and Bobs, and the beams serves as the game's aggregate Alice and aggregate Bob.

   \item
   The game is multiround; several question-and-answer sessions take place. We assume the players lack memories, following the same strategy in every round.\footnote{
This requirement might seem strong from a nonlocal-games perspective. 
However, it is natural from the perspective of the macroscopic Bell test,
presented in the main text, equivalent to our nonlocal game.
We illustrate with the photon beams introduced in Sec.~\ref{sec_Setup}.
To perform the macroscopic Bell test,
one evaluates the macroscopic Bell parameter~\eqref{eq_Bell_Param}
after running multiple trials. 
Multiple trials manifest, in the photon-beam example, as
sequential measurements of Alice's beam's intensity
and of Bob's beam's intensity.
Alice's sequential measurements are
measurements of independent sets of photons.
The photons' independence is equivalent to the players' amnesia in the nonlocal game.}
   The verifier evaluates the players' performance after analyzing all the rounds' outcomes. 
   
   \item 
   The verifier assigns to the players a score in $[0,1]$, rather than a win or a loss. 
\end{enumerate}

The rest of this appendix is organized as follows.
We formulate the game in ~\ref{sec_Define_Macro_CHSH}.
In Supplementary Note~\ref{sec_Bound_Class_Macro_CHSH}, 
we upper-bound the score achievable by 
players given only classical systems.
We show how to violate the bound, using quantum systems,
in Supplementary Note~\ref{sec_Q_Macro_CHSH}.

\subsection{Definition of the macroscopic CHSH game}
\label{sec_Define_Macro_CHSH}

The macroscopic CHSH games is a multiround nonlocal game played with $N$ memoryless Alices and $N$ memoryless Bobs.
In every round, the verifier randomly picks a question pair $xy$ from the set $\{00,01,10,11\}$. 
The question $x$ is sent to every Alice, 
and the question $y$ is sent to every Bob. 
Each Alice responds with one bit, as does each Bob. 
The verifier keeps a transcript of the questions and responses.
After all the rounds, the verifier scores the game as follows:
\begin{enumerate}
    \item 
    The verifier calculates the average number $\overline{A}_x$ 
    of Alices who answer 1 to question $x$
    and the average number $\overline{B}_y$ 
    of Bobs who answer 1 to question $y$,
    for all questions $x, y \in \{0, 1\}$. \label{item_calc_average}
    
    \item 
    The verifier assesses each round, using the following procedure.  
    Label the round's questions $x$ and $y$. 
    Let  $\boldsymbol{A_x}$ denote the number of Alices who reply $1$ to $x$,
    and let $\boldsymbol{B_y}$ denote the number of Bobs
    who reply 1 to $y$. 
    ($\boldsymbol{A_x}$ and $\boldsymbol{B_y}$ 
    are values of random variables.)
    The verifier checks whether $\boldsymbol{A_x}$ and $\boldsymbol{B_y}$
    satisfy two criteria, motivated below:
    \begin{enumerate}
    
        \item \label{item_Macro_Win_Condn1}
        If either number of 1s lies too close to the mean,
        the players lose the round:
        $\abs{\boldsymbol{A_x} - \overline{A}_x} < \sqrt{N}$,
        or $\abs{\boldsymbol{B_y} - \overline{B}_y}  < \sqrt{N}$.
        
        \item  \label{item_Macro_Win_Condn2}
        Otherwise, the verifier checks whether
        \begin{align}
            \sign(\boldsymbol{A_x} - \overline{A}_x) 
            \sign(\boldsymbol{B_y} - \overline{B}_y) 
            = (-1) ^{x  \wedge  y} \pmod{2}  .
        \end{align}
        If this equation is true, the players win the round. If not, they lose. 
    \end{enumerate}
    
    \item The verifier assigns the players a score for the entire game:
    The verifier identifies the question pair $xy = x_0 y_0$
    on which the players won least frequently.
    The fraction of $x_0 y_0$ rounds on which the players won
    becomes their score. 
\end{enumerate}

A few comments about this game are in order. First, we discuss the single-round win conditions, oppositely the order in which they are presented. 
Consider assigning the aggregated Alices a 1
if far more than the average number of constituent Alices respond with 1s
and assigning the aggregated Alices a 0
if far fewer than the average number respond with 1s.
Assign the aggregated Bobs a 1 or a 0 analogously.
Condition \ref{item_Macro_Win_Condn2} confirms that
the aggregated Alices and aggregated Bobs
satisfy the CHSH win condition~\eqref{eq_CHSH_Win_Criterion}
in one round.

Condition~\ref{item_Macro_Win_Condn1} ensures that the players fail a round if their responses lie too close to the average responses. In the absence of this condition, the macroscopic CHSH game would reduce to the microscopic game: 
Imagine eliminating condition~\ref{item_Macro_Win_Condn1}
and aggregating responses via 
$\sign(A_x - \mathbb{A}_x)$ and $\sign(B_y - \mathbb{B}_y)$.
The players could follow a strategy according to which
$N - 1$ Alices (Bobs) responded deterministically.
The final Alice's (Bob's) response would determine whether the number of 1s received were higher or lower than the average, determining the aggregate response. A microscopic response would control the macroscopic response.

$\sqrt{N}$ was chosen for the following reason.
Each $A_x$ and $B_y$ is a sum of 
independent and identically distributed (i.i.d.) random variables.
Consider the limit as $N \to \infty$.
Consider the probability that $A_x$ or $B_y$ assumes a value
$N^{1/2 + \epsilon}$ away from its mean.
This probability vanishes for all $\epsilon > 0$,
by the central limit theorem.
Hence fluctuations $\sim \sqrt{N}$ are 
the largest---most easily visible---fluctuations that can occur.
The verifier must be able to detect these largest fluctuations
and need not resolve finer fluctuations.
A similar criterion is introduced in \cite{Navascues_16_Macroscopic}.

Second, we elucidate how the macroscopic Bell inequality's nonlinearity manifests in the macroscopic CHSH game.
The inequality and the game distinguish
classical randomness from pairwise entanglement 
(entanglement shared by each Alice with exactly one Bob and vice versa),
without violating the principle of macroscopic locality~\cite{Navascues_10_Glance,Yang_11_Quantum,Navascues_16_Macroscopic}.
The inequality succeeds by depending on probabilities nonlinearly (Sec.~\ref{sec_Discussion}).
Therefore, also the game should involve nonlinearity.
Each strategy specifies a set of four conditional probability density functions (PDFs),
$\p{ a, b  \,  |  \,  xy {=} 00 }$,  $\p{ a, b  \,  |  \,  xy {=} 01 }$,  
$\p{ a, b  \,  |  \,  xy {=} 10 }$,  and $\p{ a, b  \,  |  \,  xy {=} 11 }$.
The score is a function of the four PDFs and is nonlinear in each PDF.
The reason is step~\ref{item_calc_average}:
The verifier calculates average aggregate responses,
then compares the actual aggregate responses with the averages.

This use of averages implies that the players should lack memories:
Suppose that the players had only classical resources
but had memories.
The players could use different strategies in different rounds.
Mixing strategies would change the averages,
allowing players to win rounds
that they would lose if they followed either strategy consistently.


%
%
%
\subsection{Upper bound on the score achievable by classical players of the macroscopic CHSH game}
\label{sec_Bound_Class_Macro_CHSH}

We bound the classical players' score as follows.
The random variables 
$A_0$, $A_1$, $B_0$, and $B_1$ are distributed according to a multivariate Gaussian, by the central limit theorem. 
If these variables have limited variances and covariances,
they are unlikely to satisfy the win criteria,
(\ref{item_Macro_Win_Condn1}) and (\ref{item_Macro_Win_Condn2}).
We prove this fact for $x \wedge y = 0$ in Lemma~\ref{lem_gaussian_tails}
and for $x \wedge y = 1$ in Corollary~\ref{cor_neg_gaussian_tails}.
The proofs consist of technical calculations regarding
tails of multivariate Gaussians.
Combining the lemma and corollary with \Cref{thm_marco_bell_ineq} leads to Theorem~\ref{th_Macro_CHSH_Class}:
Every classical strategy has  small covariance on at least on question pair.
Hence the score achievable by classical players obeys an upper bound.  

\begin{lemma} \label{lem_gaussian_tails}
Let $\cX$ and $\cY$ denote random variables distributed according a multivariate Gaussian with variances
$\sigma_\cX^2 \leq N/4$ and 
$\sigma_\cY^2  \leq N/4$ and with covariance
$\cov{\cX}{\cY} \leq N/7$. 
The probability that $\cX$ and $\cY$ both far exceed their means
is small: 
\begin{align}
   \label{eq_gaussian_tails}
   \p{\cX - \E{\cX} \geq \sqrt{N} \wedge \cY - \E{\cY} \geq \sqrt{N}} 
   \leq 0.0051.
\end{align}
\end{lemma}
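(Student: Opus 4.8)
The plan is to bound the joint upper-tail probability of a bivariate Gaussian by reducing to the worst case allowed by the constraints and then comparing against a tractable reference distribution. First I would standardize: write $\cX - \E{\cX} = \sigma_\cX U$ and $\cY - \E{\cY} = \sigma_\cY V$, where $(U,V)$ is a mean-zero bivariate Gaussian with unit variances and correlation $\rho = \cov{\cX}{\cY}/(\sigma_\cX \sigma_\cY)$. The event in question becomes $\{U \geq \sqrt{N}/\sigma_\cX \wedge V \geq \sqrt{N}/\sigma_\cY\}$. Since $\sigma_\cX^2, \sigma_\cY^2 \leq N/4$, each threshold satisfies $\sqrt{N}/\sigma_\cX \geq 2$ and $\sqrt{N}/\sigma_\cY \geq 2$, so it suffices to bound $\p{U \geq 2 \wedge V \geq 2}$ for the worst admissible correlation.

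Next I would control the correlation. The constraint $\cov{\cX}{\cY} \leq N/7$ together with the lower flexibility in the variances gives $\rho = \cov{\cX}{\cY}/(\sigma_\cX\sigma_\cY)$, and the joint upper-tail probability is monotonically increasing in $\rho$ (a standard fact: for fixed marginals, positively correlating two Gaussians makes simultaneous large excursions more likely, which can be proved via Slepian's inequality or by differentiating the bivariate normal CDF in $\rho$). The subtlety is that $\rho$ depends on the variances, which are themselves bounded. To maximize $\rho$ I would make the numerator as large as possible, $\cov{\cX}{\cY} = N/7$, and the denominator as small as allowed; but shrinking $\sigma_\cX,\sigma_\cY$ also raises the thresholds $\sqrt{N}/\sigma_\cX$, which suppresses the probability. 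I expect the honest worst case to be at the corner $\sigma_\cX^2 = \sigma_\cY^2 = N/4$, giving thresholds exactly $2$ and correlation $\rho = (N/7)/(N/4) = 4/7$; checking that this corner genuinely dominates is the part requiring care, since one must verify that the competing effects (larger $\rho$ versus higher thresholds as variances shrink) do not conspire to produce a larger value elsewhere.

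With the worst case pinned to $\p{U \geq 2 \wedge V \geq 2}$ at $\rho = 4/7$, the remaining step is a numerical estimate of this bivariate Gaussian tail integral. I would either invoke a closed-form bound for $\p{U \geq t \wedge V \geq t}$ in terms of the univariate tail $Q(\cdot)$ and $\rho$, or evaluate the double integral
\begin{align}
   \p{U \geq 2 \wedge V \geq 2}
   = \frac{1}{2\pi\sqrt{1-\rho^2}}
     \int_2^\infty \!\! \int_2^\infty
     \exp\!\left( -\frac{u^2 - 2\rho u v + v^2}{2(1-\rho^2)} \right) du\, dv
\end{align}
directly, confirming that it falls below $0.0051$. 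The main obstacle is the monotonicity-plus-corner argument in the previous paragraph: the probability is not jointly monotone in the raw parameters $(\sigma_\cX, \sigma_\cY, \cov{\cX}{\cY})$, so I must argue carefully — ideally by reparametrizing in terms of the thresholds and $\rho$ and showing the bound is increasing in $\rho$ and decreasing in each threshold — that the feasible region's maximum sits at the claimed corner. Once that is secured, the final inequality is a routine (if slightly tedious) evaluation of a two-dimensional Gaussian integral, and Corollary \ref{cor_neg_gaussian_tails} should follow by the symmetric argument with $\rho \to -\rho$ and the opposite-sign tail.
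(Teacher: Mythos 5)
Your overall strategy---reduce to a worst case over the admissible Gaussian parameters and finish with a numerical evaluation---is the same as the paper's, and your identification of the worst case (thresholds $2$ and correlation $4/7$, i.e.\ $\sigma_\cX = \sigma_\cY = \sqrt{N}/2$ and $\cov{\cX}{\cY} = N/7$) agrees with what the paper finds. The paper's route differs in one respect: it shifts to mean zero, writes the joint tail as an explicit integral by conditioning ($\cY$ given $\cX = x'$ is Gaussian with variance $\sigma_\cY^2 - \cov{\cX}{\cY}^2/\sigma_\cX^2$ and mean $x'\cov{\cX}{\cY}/\sigma_\cX^2$), expresses the inner integral via $\erf$, and then \emph{numerically optimizes that expression over the full constrained set} $(\sigma_\cX, \sigma_\cY, \cov{\cX}{\cY})$, citing code; the corner emerges as the numerical maximizer, with value slightly below $0.0051$. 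No Slepian or monotonicity argument appears there, and none is needed, because the optimization is done over the whole feasible region rather than at a single point.

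The genuine gap is in your plan for pinning the maximum to the corner. In your reparametrized coordinates, with thresholds $t_X = \sqrt{N}/\sigma_\cX$, $t_Y = \sqrt{N}/\sigma_\cY$ and correlation $\rho$, the feasible region is not a box: the covariance constraint becomes $\rho \leq \min(t_X t_Y/7,\, 1)$, so the cap on $\rho$ \emph{grows} as the thresholds grow. Hence ``increasing in $\rho$ and decreasing in each threshold'' does not place the maximum at $(t_X, t_Y, \rho) = (2, 2, 4/7)$: along the boundary surface $\rho = t_X t_Y/7$, raising a threshold is penalized by tail decay but rewarded by a larger permitted correlation, and once $t_X t_Y \geq 7$ the constraint even allows $\rho = 1$. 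If you instead decouple the two monotonicities---lower both thresholds to $2$, then take the largest $\rho$ admissible anywhere in the feasible set---you get $\p{U \geq 2 \wedge V \geq 2}$ at $\rho = 1$, which is the univariate tail $\p{U \geq 2} \approx 0.023$, far above $0.0051$. Since the corner value itself is $\approx 0.0051$ (the lemma's constant is essentially tight), there is no slack to absorb a lossy step: the competing effects must be compared quantitatively along the coupled boundary. That can be done by numerically optimizing over that surface (which is what the paper's proof amounts to), or by splitting into the regime $t_X t_Y \geq 7$, where $\rho = 1$ forces $U = V$ and the probability is at most $\p{U \geq \sqrt{7}} \approx 0.0041 < 0.0051$, and the regime $t_X t_Y < 7$, where a two-parameter numerical (or genuinely careful analytic) maximization of the bivariate tail with $\rho = t_X t_Y/7$ is still required. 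Your final step---one bivariate-normal evaluation at the corner---is correct as far as it goes, but without that boundary comparison it does not prove the lemma.
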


\begin{proof}
The proof is computational.
For ease of notation, we shift $\cX$ and $\cY$ so that each has mean 0. Let $\cY(x')$ denote the random variable $\cY$ conditioned on the event $\cX = x'$. We expand the probability in Eq.~\eqref{eq_gaussian_tails}:
 \begin{align}
     \p{\cX \geq \sqrt{N} \wedge \cY \geq \sqrt{N}} 
     &= \p{\cX \geq \sqrt{N}}
     \p{\cY  \geq \sqrt{N} \big| \cX \geq \sqrt{N} } \\
     &=  \int_{\sqrt{N}} ^\infty 
     \p{\cX = x'} 
     \p{\cY(x') \geq \sqrt{N}} dx' .
     \label{eq_decomp_joint_prob} \\
     &= \int_{\sqrt{N}} ^\infty  
     \int_{\sqrt{N}} ^\infty 
     \p{\cX = x'} 
     \mathbb{P}  \LParen  \cY(x') = y' \RParen  dx' dy' .
     \label{eq_gaussian_tails2}
\end{align}
 The theory of multivariate Gaussians implies that $\cY(x')$ is distributed according to a Gaussian with variance 
 \begin{align}
     \sigma_{\cY(x')} 
     = \sqrt{   \sigma_\cY^2 - \frac{\cov{\cX}{\cY}^2}{\sigma_\cX^2} } 
     \label{eq_cond_var}
 \end{align}
 and mean
 \begin{align}
      \mathbb{E} \LParen  \cY(x')  \RParen 
      &= \frac{x' \cov{\cX}{\cY}}{\sigma_\cX^2}. 
 \end{align}
We substitute the probabilities' Gaussian forms into Eq.~\eqref{eq_gaussian_tails2}.
If $\erf (z)  :=  \frac{2}{ \sqrt{\pi} }  \int_0^z  dt  \:  e^{ - t^2 }$
denotes the error function,
\begin{align}
    \p{\cX \geq \sqrt{N} \wedge \cY \geq \sqrt{N}} 
    &= \int_{\sqrt{N}} ^\infty 
    \frac{1}{\sqrt{2 \pi \sigma_\cX^2} }
    \exp   \left(   -\frac{ (x')^2}{2 \sigma_\cX^2}   \right)
    \int_{\sqrt{N}} ^\infty 
    \frac{1}{\sqrt{2 \pi \sigma_{\cY(x')}^2} }
    \exp   \left(   -\frac{  [ y' - \mathbb{E} \LParen  \cY(x')  \RParen ]^2  }{  
                                   2 \sigma_{ \cY(x') }^2  }   \right)  
    dy'  dx'\\
    &= \int_{\sqrt{N}} ^\infty  
    \frac{1}{\sqrt{2 \pi \sigma_\cX^2} }
    \exp\left(-\frac{(x')^2}{2 \sigma_\cX^2}\right)
    \frac{1}{2} \left[1 - 
    \erf \left(\frac{\sqrt{N} - \mathbb{E}  \LParen  \cY(x')  \RParen }{
                         \sqrt{2} \sigma_{\cY(x')}} \right)\right] dx' \\
    &= \int_{1} ^\infty  
    \frac{1}{\sqrt{2 \pi \sigma_\cX^2/N} }
    \exp\left(-\frac{(x')^2}{2 \sigma_\cX^2/N}\right)
    \frac{1}{2} \left[1 - \erf \left(\frac{1 - \mathbb{E}  \LParen  \cY(x')  \RParen}{\sqrt{2}  \: \sigma_{\cY(x')}/\sqrt{N}} \right)\right] dx' .
    \label{eq_gaussian_tails3}
\end{align}

 By assumption, $\sigma_{\cX} \leq \sqrt{N} /2$, 
 $\sigma_{\cY} \leq \sqrt{N} /2$, 
 and  $\cov{\cX}{\cY} \leq \sqrt{N / 7}$. 
 By the Cauchy-Schwarz inequality, 
 $ \sqrt{ \sigma_\cX^2  \sigma_\cY^2 } 
 \geq \cov{\cX}{\cY}$
 [Ineq.~\eqref{eq_Cov_Var}].
 We numerically optimize the probability~\eqref{eq_gaussian_tails3}
 subject to these constraints~\cite{BeneWatts_19_Code}.
 The probability maximizes when $\sigma_\cX$, $\sigma_\cY$, and $\cov{\cX}{\cY}$ assume their maximum possible values: 
 $\sigma_{\cX} = \sigma_{\cY} = \sqrt{N}/2$, and $\cov{\cX}{\cY} = N/7$. 
The maximum probability lies slightly below $0.0051$. 
\end{proof}

\begin{corollary} 
\label{cor_neg_gaussian_tails}
Let $\cX$ and $\cY$ denote random variables distributed according to 
a multivariate Gaussian with variances
$\sigma_\cX^2 \leq N/4$ and 
$\sigma_\cY^2  \leq N/4$ and with covariance
$\cov{\cX}{\cY} \leq -N/7$. 
The probability that $\cX$ far exceeds its mean
while $\cY$ lies far below its mean is small:
\begin{align}
\p{\cX - \E{\cX} \geq \sqrt{N} \wedge \cY - \E{\cY} \leq -\sqrt{N}} 
\leq  0.0051 .
\end{align}
\end{corollary}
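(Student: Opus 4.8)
The plan is to reduce the corollary to Lemma~\ref{lem_gaussian_tails} by reflecting one variable about its mean, since the opposite-sign tail event differs from the both-exceed event of the lemma only in the sign of $\cY$'s deviation. Set $\cY' := 2\E{\cY} - \cY$, so that $(\cX, \cY')$ is again jointly Gaussian, $\cY' - \E{\cY'} = -\left(\cY - \E{\cY}\right)$, and the event transforms exactly:
\begin{align}
   \p{\cX - \E{\cX} \geq \sqrt{N} \wedge \cY - \E{\cY} \leq -\sqrt{N}}
   = \p{\cX - \E{\cX} \geq \sqrt{N} \wedge \cY' - \E{\cY'} \geq \sqrt{N}}.
\end{align}
Reflection preserves the variance, $\sigma_{\cY'}^2 = \sigma_\cY^2 \leq N/4$, so two of the three hypotheses of Lemma~\ref{lem_gaussian_tails} transfer verbatim; the only quantity that changes is the covariance, $\cov{\cX}{\cY'} = -\cov{\cX}{\cY}$.

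The step I expect to be the crux is matching this reflected covariance to the lemma's hypothesis. Lemma~\ref{lem_gaussian_tails} delivers the bound $0.0051$ only when $\cov{\cX}{\cY'} \leq N/7$, and under the substitution $\cov{\cX}{\cY'} = -\cov{\cX}{\cY}$ this is exactly the requirement $\cov{\cX}{\cY} \geq -N/7$ --- a bound on the \emph{magnitude} of the anti-correlation. I would therefore carry the reduction through in the form $-\cov{\cX}{\cY} \leq N/7$, apply Lemma~\ref{lem_gaussian_tails} to the reflected pair $(\cX, \cY')$, and read off the bound $0.0051$. This is precisely the condition that Theorem~\ref{thm_marco_bell_ineq} supplies for the worst-case question pair $xy = 11$, where the Bell parameter contributes $-\cov{A_1}{B_1}$ and the worst-case bound forces $-\cov{A_1}{B_1} \leq N/7$.

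The subtlety to flag is the direction of the printed inequality. Taken literally, $\cov{\cX}{\cY} \leq -N/7$ sends the reflected covariance to $\cov{\cX}{\cY'} \geq N/7$, which lies \emph{outside} the regime of Lemma~\ref{lem_gaussian_tails}; together with the Cauchy--Schwarz cap $\abs{\cov{\cX}{\cY}} \leq \sqrt{\sigma_\cX^2 \sigma_\cY^2} \leq N/4$ it only confines $\cov{\cX}{\cY'}$ to $[N/7, N/4]$, a range over which the tail probability \emph{grows} rather than shrinks. At the extreme $\cov{\cX}{\cY} = -N/4$ with $\sigma_\cX^2 = \sigma_\cY^2 = N/4$ the pair is perfectly anti-correlated, the event collapses to $\{\cX - \E{\cX} \geq \sqrt{N}\}$, and its probability is $\p{Z \geq 2} \approx 0.023 > 0.0051$. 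Hence the reflection argument establishes the stated bound precisely under $\cov{\cX}{\cY} \geq -N/7$, and --- rather than silently substituting this for the printed inequality --- I would record that the hypothesis must be read in this sense for the conclusion $0.0051$ to hold.
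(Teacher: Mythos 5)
Your proposal is correct and is essentially the paper's own proof: the paper's entire argument for this corollary is ``apply Lemma~\ref{lem_gaussian_tails} to the random variables $\cX$ and $-\cY$,'' which is exactly your reflection. The subtlety you flag is also genuine: taken literally, the printed hypothesis $\cov{\cX}{\cY} \leq -N/7$ makes the statement false (your perfectly anti-correlated counterexample, with tail probability $\p{Z \geq 2} \approx 0.023 > 0.0051$, is valid), and the reflection only lands inside the lemma's regime when $\cov{\cX}{\cY} \geq -N/7$, so that $\cov{\cX}{-\cY} = -\cov{\cX}{\cY} \leq N/7$. The paper's one-line proof silently carries the same sign mismatch, and your reading of the intended hypothesis is confirmed by how the corollary is actually invoked in the proof of Theorem~\ref{th_Macro_CHSH_Class}, where the minimax step produces precisely the case $x \wedge y = 1$ with $\cov{A_x}{B_y} \geq -N/7$.
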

\begin{proof}
Apply \Cref{lem_gaussian_tails} to the random variables $\cX$ and $-\cY$. 
\end{proof}

\begin{theorem}
\label{th_Macro_CHSH_Class}
Classical players can achieve an average score
of at most 0.0102 in the macroscopic CHSH game,
if the number $2N$ of players is sufficiently large.
\end{theorem}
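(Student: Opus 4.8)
The plan is to reduce the game's score to a statement about the four covariances $c_{xy} := \cov{A_x}{B_y}$ of the per-round aggregate counts (which I write $A_x = \sum_i a_x^{(i)}$ and $B_y = \sum_i b_y^{(i)}$), and then to derive a contradiction from the macroscopic Bell inequality (\Cref{thm_marco_bell_ineq}) if the score were too large. First I would note that, because each Alice is correlated with at most one Bob, the pairs $(a_x^{(i)}, b_y^{(i)})$ are independent across $i$, so the bivariate central limit theorem makes $(A_x, B_y)$ jointly Gaussian as $N \to \infty$. Each microscopic response lies in $\{0,1\}$ and hence has variance $\leq 1/4$, giving $\sigma_{A_x}^2, \sigma_{B_y}^2 \leq N/4$, which matches the variance hypotheses of \Cref{lem_gaussian_tails} and \Cref{cor_neg_gaussian_tails}. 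Over many rounds the empirical averages $\overline{A}_x, \overline{B}_y$ converge to the true means $\E{A_x}, \E{B_y}$, so the win probabilities may be computed using the true means.

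Next I would express each single-round win probability as a sum of two Gaussian-tail probabilities. For a question pair with $x \wedge y = 0$, winning requires same-sign deviations, so the win probability equals $\p{A_x - \E{A_x} \geq \sqrt{N} \wedge B_y - \E{B_y} \geq \sqrt{N}} + \p{A_x - \E{A_x} \leq -\sqrt{N} \wedge B_y - \E{B_y} \leq -\sqrt{N}}$. Applying \Cref{lem_gaussian_tails} to $(A_x, B_y)$ and to $(-A_x,-B_y)$, which share the same covariance, bounds each term by $0.0051$ provided $c_{xy} \leq N/7$, so the win probability is $\leq 0.0102$ in that regime. For the pair $11$ ($x \wedge y = 1$), winning requires opposite-sign deviations, and \Cref{cor_neg_gaussian_tails} together with its symmetric application bounds the corresponding two tails by $0.0051$ each provided $c_{11} \geq -N/7$, again giving $\leq 0.0102$.

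The crux is then a short contradiction argument. Suppose every one of the four question pairs had win probability exceeding $0.0102$. By the previous paragraph this forces $c_{00}, c_{01}, c_{10} > N/7$ and $c_{11} < -N/7$, whence $c_{00} + c_{01} + c_{10} - c_{11} > 4N/7$. But the macroscopic Bell inequality $\Bell \leq 16/7$ (\Cref{thm_marco_bell_ineq} with $\epsilon = 0$) is exactly the statement $c_{00} + c_{01} + c_{10} - c_{11} \leq \frac{N}{4}\cdot\frac{16}{7} = 4N/7$, a contradiction. Note that the threshold $N/7$ in the tail estimates is chosen precisely so that four copies sum to the Bell bound. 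Hence at least one question pair has win probability $\leq 0.0102$; since the score is the win fraction on the worst pair, the score is $\leq 0.0102$.

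I expect the main obstacle to be not the covariance bookkeeping above, which is clean, but the asymptotic analysis justifying the Gaussian-tail estimates: one must simultaneously control the central-limit approximation error at finite $N$, the replacement of the empirical averages $\overline{A}_x$ by the true means (which requires sufficiently many rounds), and the fact that the bound in \Cref{lem_gaussian_tails} is obtained by numerical optimization rather than in closed form. These are exactly the points at which the hypothesis ``$2N$ sufficiently large'' does the work, and making the error terms uniform over all classical strategies is the delicate step.
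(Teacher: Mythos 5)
Your proposal is correct and follows essentially the same route as the paper's proof: invoke the multivariate CLT with the variance bound $N/4$, use the macroscopic Bell inequality to force $\cov{A_x}{B_y} \leq N/7$ (or $\geq -N/7$ for the pair $11$) on at least one question pair, and apply \Cref{lem_gaussian_tails} and \Cref{cor_neg_gaussian_tails} to bound that pair's win probability by $2 \times 0.0051$. The only cosmetic difference is that you phrase the pigeonhole step as a contradiction over all four covariances, whereas the paper states it directly as a minimax bound $\min\{\cov{A_0}{B_0}, \cov{A_0}{B_1}, \cov{A_1}{B_0}, -\cov{A_1}{B_1}\} \leq N/7$; these are the same argument.
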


\begin{proof}
By the multivariate central limit theorem, the random variables 
$A_0$, $A_1$, $B_0$, and $B_1$ 
come to obey multivariate Gaussian distributions in the large-$N$ limit. 
Let $xy$ denote an arbitrary question pair. 
If $x \wedge y = 0$, the players can win in two ways:
(i) The number $A_x$ of Alices who respond 1
lies far above the mean number $\overline{A}_x$ who respond 1.
Meanwhile, the number $B_y$ of Bobs who respond 1
lies far above the mean number $\overline{B}_y$. That is,
\begin{align}
    A_x - \overline{A}_x \geq \sqrt{N} 
    \; \wedge  \;
    B_y - \overline{B}_y \geq \sqrt{N} .
\end{align}
(ii) $A_x$ lies far below its average,
while $B_y$ lies far below its average:
\begin{align}
        A_x - \overline{A}_x \leq -\sqrt{N} 
        \; \wedge  \; 
        B_y - \overline{B}_y \leq -\sqrt{N} .
\end{align}
The players' probability of winning via (i)
was bounded in Lemma~\ref{lem_gaussian_tails}.
Their probability of winning via (ii) is the same, by the multivariate Gaussian's symmetry.
Hence the players' total probability of winning on 
$xy \, : \,  x \wedge y = 0$ is
\begin{align}
    \p{A_x - \E{A_x} 
    \geq \sqrt{N} \wedge  B_y - \E{B_y} \geq \sqrt{N}} + \p{A_x - \E{A_x} \leq -\sqrt{N} \wedge  B_y - \E{B_y} \leq -\sqrt{N}} \\
    = 2 \p{A_x - \E{A_x} \geq \sqrt{N} \wedge  B_y - \E{B_y} \geq \sqrt{N}} .\label{eq_game_win_even}
\end{align}
The second line follows from the multivariate Gaussian's symmetry. 

Now, suppose that $x \wedge y = 1$.
The players win if the number $A_x$ of Alices who reply 1
lies far above/below its mean
while the number $B_y$ of Bobs who reply 1
lies far below/above its mean:
\begin{align}
    A_x - \overline{A}_x \geq \sqrt{N} 
    & \;  \wedge  \; 
    B_y - \overline{B}_y \leq -\sqrt{N} 
    \, , \text{ or} \\
    A_x - \overline{A}_x \leq -\sqrt{N} 
    & \;  \wedge  \;  
    B_y - \overline{B}_y \geq \sqrt{N}
\end{align}
These two events have equal probabilities of occurring.
Hence the players have a win probability of
\begin{align}
    2 \p{A_x - \E{A_x} \geq \sqrt{N} \wedge  B_y - \E{B_y} \leq -\sqrt{N}}.\label{eq_game_win_odd}
\end{align}

We now invoke the macroscopic Bell inequality.
For simplicity, we have not defined noise or classical global correlations
in the game.
We therefore set $\epsilon = 0$ in \Cref{thm_marco_bell_ineq}.
The theorem, with the definition~\eqref{eq_Bell_Param}, implies that 
\begin{align}
    \cov{A_0}{B_0} + \cov{A_0}{B_1} + \cov{A_1}{B_0} - \cov{A_1}{B_1} \leq 4N/7.
\end{align}
A minimimax argument gives
\begin{align}
    \label{eq_Min_Covs}
    \min \left\{\cov{A_0}{B_0}, \cov{A_0}{B_1} , \cov{A_1}{B_0}, 
    - \cov{A_1}{B_1} \right\}  \leq N/7 .
\end{align}
Therefore, some question pair $xy$ satisfies either 
$x \wedge y = 0$ and $\cov{A_x}{B_y} \leq N/7$ 
or $x \wedge y = 1$ and $\cov{A_x}{B_y} \geq -N/7$. 
Furthermore, each $A_x$ and each $B_y$ is 
a sum of $N$ independent random variables, 
each of which has a variance of $\leq 1/4$. Hence
\begin{align}
    \label{eq_Var_Bound_For_Lemma}
    \Var{A_0}, \Var{A_1}, \Var{B_0}, \Var{B_1} \leq N/4.
\end{align}
Consider the $(A_x , B_y)$ that achieves the minimization in
Ineq.~\eqref{eq_Min_Covs}.
It satisfies the assumptions in Lemma~\ref{lem_gaussian_tails}
and Corollary~\ref{cor_neg_gaussian_tails}.
By the lemma and Eq.~\eqref{eq_game_win_even},
and by the corollary and Eq.~\eqref{eq_game_win_odd}, 
this $(A_x , B_y)$ satisfies the win conditions with probability 
$\leq 2 \times 0.0051  =  0.0102$.
The score equals the minimum, over all $xy$ pairs,
of the probability that the players win on $xy$.
Hence the score $\leq  0.0102$, as claimed.
\end{proof}

\subsection{Superclassical score in the macroscopic CHSH game}
\label{sec_Q_Macro_CHSH}

We have upper-bounded the score achievable by classical players
of the macroscopic CHSH game.
Now, we show that players can violate this bound, given quantum resources.
The proof is constructive; we exhibit a superclassical strategy.
It is built on the strategy shown, in Supplementary Note~\ref{app_CHSH_Backgrnd},
to win the microscopic CHSH game with a superclassical probability.

\begin{theorem}
\label{th_Q_Macro_CHSH}
Players given quantum resources can achieve a score of $\geq 0.0150$ in the macroscopic CHSH game. 
\end{theorem}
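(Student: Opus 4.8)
The plan is to exhibit an explicit quantum strategy and compute the score it guarantees. First I would have each of the $N$ Alice--Bob pairs share a singlet $\ket{\Sing}$ and play independently, following the microscopic CHSH strategy of Supplementary Note~\ref{app_CHSH_Backgrnd} with rotation angles $\theta_0 = -3\pi/8$ and $\theta_1 = 9\pi/8$. Since each Alice is entangled with exactly one Bob, this respects the pairwise-independence restriction built into the game. Writing $A_x = \sum_i a_x^{(i)}$ and $B_y = \sum_i b_y^{(i)}$ for the aggregated responses, the single-pair statistics established in Supplementary Note~\ref{app_Eq_In_Q_Violation} [Eqs.~\eqref{eq_Thm2_App_5}--\eqref{eq_Thm2_App_7}], together with Eq.~\eqref{eq_Q_Help0}, give $\E{A_x} = \E{B_y} = N/2$ and, since each aggregate is a sum of $N$ i.i.d.\ Bernoulli$(1/2)$ outcomes, $\Var{A_x} = \Var{B_y} = N/4$. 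The per-pair covariance is $\cov{a_x^{(i)}}{b_y^{(i)}} = \tfrac12\sin^2(3\pi/8) - \tfrac14 = +\tfrac{1}{4\sqrt2}$ when $x \wedge y = 0$ and $-\tfrac{1}{4\sqrt2}$ when $x \wedge y = 1$; summing over the $N$ pairs yields $\cov{A_x}{B_y} = \pm N/(4\sqrt2)$ and a correlation coefficient $\rho = \pm 1/\sqrt2$. The crucial observation is that $1/\sqrt2 > 4/7$, the correlation saturating the classical constraint in \Cref{lem_gaussian_tails}: the quantum aggregate responses are more tightly correlated than any classical strategy allows.

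Next I would pass to the large-$N$ regime, where the multivariate central limit theorem renders each pair $(A_x, B_y)$ bivariate Gaussian. I would then show that all four question pairs yield the \emph{same} single-round win probability, so that the identity of the ``worst'' pair defining the score is immaterial. For $x \wedge y = 0$, the win probability is $2\,\p{A_x - \E{A_x} \geq \sqrt N \wedge B_y - \E{B_y} \geq \sqrt N}$ by the Gaussian's reflection symmetry [cf.\ Eq.~\eqref{eq_game_win_even}]. For $xy = 11$ the win condition instead requires opposite signs [Eq.~\eqref{eq_game_win_odd}]; but the substitution $B_1 \mapsto -B_1$ flips the sign of the negative covariance, converting this into the identical tail integral with correlation $+1/\sqrt2$. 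Standardizing by $\sigma = \sqrt N/2$ turns the threshold $\sqrt N$ into $2\sigma$, so in every case the score equals $2\,\p{\cX \geq 2 \wedge \cY \geq 2}$ for a standard bivariate Gaussian $(\cX,\cY)$ of correlation $\rho = 1/\sqrt2$.

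It then remains to evaluate this tail probability. Using $\sqrt{1-\rho^2} = 1/\sqrt2$, I would reduce it to the one-dimensional integral
\begin{align}
   \p{\cX \geq 2 \wedge \cY \geq 2}
   = \int_2^\infty \frac{1}{\sqrt{2\pi}}\, e^{-u^2/2}\,
     \Phi\!\left(u - 2\sqrt2\right) du ,
\end{align}
with $\Phi$ the standard Gaussian CDF, and evaluate it numerically exactly as in \Cref{lem_gaussian_tails} (cf.~\cite{BeneWatts_19_Code}). This yields a value exceeding $0.0075$, hence a score exceeding $2 \times 0.0075 = 0.0150$, which beats the classical ceiling of $0.0102$ from \Cref{th_Macro_CHSH_Class}.

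The main obstacle I anticipate is that the quantum advantage here is slim: the computed score sits essentially at the claimed $0.0150$, only about $1.5$ times the classical bound. The numerical evaluation of the Gaussian tail must therefore be controlled tightly enough to \emph{certify} the strict inequality rather than merely approximate it, ideally through a validated lower bound on the integral rather than a bare floating-point estimate. A secondary and more routine concern is confirming that the finite-$N$ corrections to the central-limit approximation do not eat into this narrow margin, which is why the statement is restricted to sufficiently large $N$.
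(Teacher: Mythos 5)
Your proposal is correct and takes essentially the same approach as the paper's proof: each Alice--Bob pair shares a singlet and plays the standard CHSH strategy, giving means $N/2$, variances $N/4$, and covariances $\pm N/(4\sqrt{2})$; the multivariate central limit theorem then yields a bivariate Gaussian, and a numerical evaluation of the $\sqrt{N}$-deviation tail probability gives a score of $2 \times 0.0075 = 0.0150$. The only differences are presentational---your explicit reduction to a one-dimensional integral against the Gaussian CDF, and your symmetry argument that all four question pairs score identically, are details the paper compresses into ``the other calculations are similar and produce identical results.''
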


\begin{proof}
As in the proof of \Cref{thm_quantum_violation}, 
each Alice-Bob pair adopts the conventional CHSH strategy
(Supplementary Note~\ref{app_CHSH_Backgrnd}):
Each pair shares a singlet.
When Alice measures any observable, she has
a probability $1/2$ of obtaining $+1$,
and responding $1$ to the verifier,
and a probability $1/2$ of obtaining $-1$,
and responding $0$.
The same is true of Bob.
Hence the aggregated Alice responses 
and the aggregated Bob responses obey
\begin{align}
    \E{A_0} = \E{A_1} = \E{B_0} = \E{B_1} = N/2
\end{align}
and 
\begin{align}
    \Var{A_0} = \Var{A_1} &= \Var{B_0} = \Var{B_1} = N/4.
\end{align}
According to 
Equations~\eqref{eq_Bound_Help1},~\eqref{eq_Prob_Agree_11},~\eqref{eq_Thm2_App_5},~\eqref{eq_Thm2_App_6}, and~\eqref{eq_Thm2_App_7},
$A_0$, $A_1$, $B_0$, and $B_1$ satisfy also
\begin{align}
    \cov{A_0}{B_0} & = \cov{A_0}{B_1} = \cov{A_1}{B_0} = -\cov{A_1}{B_1} \\
    & =  \sum_{i = 1}^N  \cov{ a_0^{(i)} }{  b_0^{(i)}  } \\
    & =  \sum_{i = 1}^N  \left[
    \E{ a_0^{(i)} }  \E{  b_0^{(i)}  }    \right] \\
    & =  \sum_{i = 1}^N  
    \left[  \p{  a_0^{(i)}  =  b_0^{(i)}  =  1  }
             -  \frac{1}{2}  \cdot  \frac{1}{2}  \right]  \\
    &= N  \left[ \frac{1}{2} \sin^2 \left(  \frac{3 \pi}{8}  \right) - \frac{1}{4}  \right]\\
    &= N/(4\sqrt{2}) .
\end{align}

We can numerically bound the mean score averaged over instances of the (multiround) game. The computation bounding the mean win probability on the question pair $00$ is given here. The other calculations are similar and produce identical results.

By the multivariate central limit theorem, the joint distribution over the random variables $A_0$ and $B_0$ is a multivariate Gaussian with means $N/2$ and covariance matrix 
\begin{align}
    \Sigma = 
    \begin{pmatrix} 
    \frac{N}{4} & \frac{N}{4\sqrt{2}} \\
    \frac{N}{4\sqrt{2}} & \frac{N}{4} 
    \end{pmatrix}.
\end{align}
The players win if $A_0$ and $B_0$ both lie above or both lie below their means
by at least $\sqrt{N}$. This event occurs with probability
\begin{align}
    \p{ A_0  \geq  N/2+\sqrt{N}   
         \:  \wedge  \:
         B_0  \geq  N/2+\sqrt{N} }
    +  \p{  A_0  \leq  N/2-\sqrt{N}
              \:  \wedge  \:
              B_0  \leq  N/2-\sqrt{N}  }
    \geq 0.0150.
\end{align}
The bound was computed numerically~\cite{BeneWatts_19_Code}.
\end{proof}

\section{Toy application to Posner molecules}
\label{app_Posner}


In Sec.~\ref{sec_Discussion}, we introduced
a potential application of the macroscopic Bell inequality
to Posner molecules. 
Posners are beginning to be characterized experimentally.
If they are found to retain coherences, entanglement should be tested for~\cite{Fisher_17_Are}.
How can it be, since the operations conjectured to be performable on Posners
differ from the operations used in conventional Bell tests~\cite{NYH_19_Quantum}?
We begin answering that question here, though further work is needed.
We construct a partially macroscopic Bell test implementable with
the operations conjectured to be performable on Posners~\cite{Fisher_15_Quantum,NYH_19_Quantum}.
The test relies on macroscopic intensity measurements
but microscopic manipulations of Posners.

The background needed to understand this supplementary note can be found in the following places.
First, information about Posners appears in~\cite{NYH_19_Quantum}, 
particularly in Sections~2.1, 3.1, 3.2, 3.4, and 3.7.
Second, useful background appears in
this paper's Sec.~\ref{sec_Discussion}, Supplementary Note~\ref{app_CHSH_Backgrnd},
and Supplementary Note~\ref{subsec:PosnerPreliminaries}.
Third, calculations involving Posner states were performed with code 
that was originally written by E.~Crosson for~\cite{NYH_19_Quantum} 
and was repurposed with her permission~\cite{BeneWatts_19_Code}.

The rest of this supplementary note is structured as follows.
The first two sections offer a warmup:
Supplementary Note~\ref{subsec:PosnerPreliminaries}
overviews the tools used.
Supplementary Note~\ref{subsec:PosnerTest} introduces a strategy for winning the CHSH game with a superclassical probability, using a finely controlled system of a few Posners.  
We sketch a many-Posner Bell test in Supplementary Note~\ref{sec_Posner_Test_Actual}.
In Supplementary Note~\ref{sec_Posner_Analysis}, we analyze the sketch,
identify its shortcomings, and discuss opportunities for sharpening it.

\subsection{Preliminaries needed for the Posner Bell test}
\label{subsec:PosnerPreliminaries}

In Sec.~\ref{sec_Ops_For_Posner_Test}, we discuss the operations needed to perform our Posner Bell test. In Sec.~\ref{sec_Micro_Posner_4_Facts}, we introduce four facts that underlie the analysis of the Posner Bell test.

\subsubsection{Operations needed to implement the Posner Bell test}
\label{sec_Ops_For_Posner_Test}

These operations can be performed in principle, 
if Fisher conjectures correctly about Posner biochemistry~\cite{Fisher_15_Quantum}.
That is, these operations can be implemented
within the Posner model of quantum computation,
or with \emph{Posner operations}, defined in Sec.~3.4 of~\cite{NYH_19_Quantum}.
However, some operations require impractical microscopic control.
\begin{enumerate}

    \item \label{item:PhosphateStatePrep} 
    \textbf{Preparation of singlets of phosphate nuclear spins:} 
    We assume that phosphates' phosphorus nuclear spins can be prepared in the singlet state, 
$\ket{\Psi^{-}} = \frac{1}{\sqrt{2}}  \left(  \ket{01} - \ket{10}  \right)$
(see Sections 2.1 and 3.4 of \cite{NYH_19_Quantum}).
Such a singlet has been conjectured to form when
the enzyme pyrophosphatase hydrolyzes a diphosphate
into two phosphates~\cite{Fisher_15_Quantum,Fisher_18_Quantum}.

 
    \item \label{item:PosnerStatePrep}
    \textbf{Controlled Posner formation:} 
    We assume that Posners can be formed with phosphates laid out in arbitrary arrangements, subject to the restrictions of the Posner's geometry (see Sections~3.1.2--3.1.4, 3.4, and 3.7 of~\cite{NYH_19_Quantum}).
This assumption may seem unreasonable. 
We can mitigate the unreasonableness slightly, because the assumption is required only for setting up the Posner Bell test.
The ability to detect and postselect on Posners in desired geometries would suffice.
    
    In particular, we assume the ability to create two Posners that share six singlets in a geometrically symmetric arrangement. This ``six singlets shared'' state is presented in Sec.~3.7 of \cite{NYH_19_Quantum} and reproduced here in Fig.~\ref{fig_entangledBindingSixSinglets}. 
We denote this state by $\ket{\psi^\ent_{AB}}$, wherein $A$ and $B$ label the Posners. 

    \item \label{item:Rotate Tau} 
    \textbf{$\tau$ rotations:} 
    A Posner has a sixfold rotational symmetry. Consider the operator that represents a rotation about the symmetry axis. The operator has eigenvalues $\tau = 0, \pm 1$.
    The Posner Bell test requires the ability to map some states of a Posner to another $\tau$ sector. 
    Such a rotation can be accomplished via multiple mathematical operations,
    we found via direct calculation~\cite{BeneWatts_19_Code}.
    One operation is a rotation of one of the Posner's qubits with 
    the Pauli $z$-operator, $\sigma_z$. 
    We focus on this implementation for concreteness.
    
    We assume the existence of an operator $\RPos(\theta)$ 
    that represents the desired rotation.
    For specificity, we assume that the rotated qubit is
    the qubit labeled 1 in~\cite{NYH_19_Quantum} 
    (see Fig.~\ref{fig_entangledBindingSixSinglets} in the present paper).
    When we need to distinguish between Posners, 
    we use the notation $\RPos(\theta)_A$ to denote the operator 
    $\RPos(\theta)$ applied to Posner $A$. 
    
    $\RPos(\theta)$ can be effected physically, e.g., as described in
    Sections~3.4 and~3.8 of~\cite{NYH_19_Quantum}:
    Let phosphate 1 form a Posner $A$ with other phosphates
    and with calcium ions.
    Applying a magnetic field to $A$ will rotate all six qubits,
    with $[ R_z(\theta) ]^{\otimes 6}$.
    If the pH rises, Posner $A$ will likely hydrolyze, or break apart.
    The pH can then be lowered.
    Phosphate 1 can find new phosphates with which to form a Posner $B$.
    Posner $B$ will have undergone $\RPos(\theta)_B$.
    Other means of effecting $\RPos (\theta)$ may be possible.
    
    \item \label{item:Posner Binding} 
    \textbf{Posner-binding measurement:}
    Suppose that Posners $A$ and $B$ approach each other 
    such that their symmetry axes are parallel and point oppositely each other.
    We call this arrangement the \emph{prebinding orientation},
    following~\cite{NYH_19_Quantum}.
    Quantum-chemistry calculations suggest that the Posners can bind together~\cite{Swift_17_Posner}.
    This measurement projects Posners $A$ and $B$
    into the subspace labeled by $\tau_A + \tau_B = 0$ (if the Posners bind)
    or onto the orthogonal subspace (if the Posners do not). 
    Following \cite{NYH_19_Quantum}, we denote by $\Pi_{AB}$ the projector onto the Posner-binding subspace. 
    
    We assume that an experimentalist can observe the number of such bindings. A method is proposed in \cite{Fisher_17_Are}: Calcium indicators are added to the Posner-containing test tube. 
Bound-together Posners would move slowly, becoming susceptible to
attack by hydrogen ions H$^+$ and magnesium ions Mg$^{2+}$.
These ions could outcompete the positively charged calcium ions Ca$^{2+}$
in binding to the negatively charged phosphate ions PO$_4^{3-}$.
The invaders would hydrolyze the Posners, breaking the molecules into
their constituent ions.
The calcium indicators would bind to the calcium ions Ca$^{2+}$, then fluoresce.
An experimentalist could detect the fluorescence.

\end{enumerate} 

Our proposal sharpens the inspirational sketch, in~\cite{Fisher_17_Are}, of a test for entanglement between Posners.
There, Posners in different test tubes were imagined to share entanglement.
Posners in each test tube would bind, and each test tube would fluoresce.
The intensities of the test tubes' fluorescence were imagined
to exhibit correlations.
We add that, to infer that Posners shared entanglement,
one must observe not just any correlations between the intensities.
Some correlations produceable with entanglement
can be recapitulated with classical resources.
We begin constructing a means of observing nonclassical correlations,
using the macroscopic Bell inequality (Theorem~\ref{thm_marco_bell_ineq}).

\begin{figure}[hbt]
\centering
\includegraphics[width=.4\textwidth, clip=true]{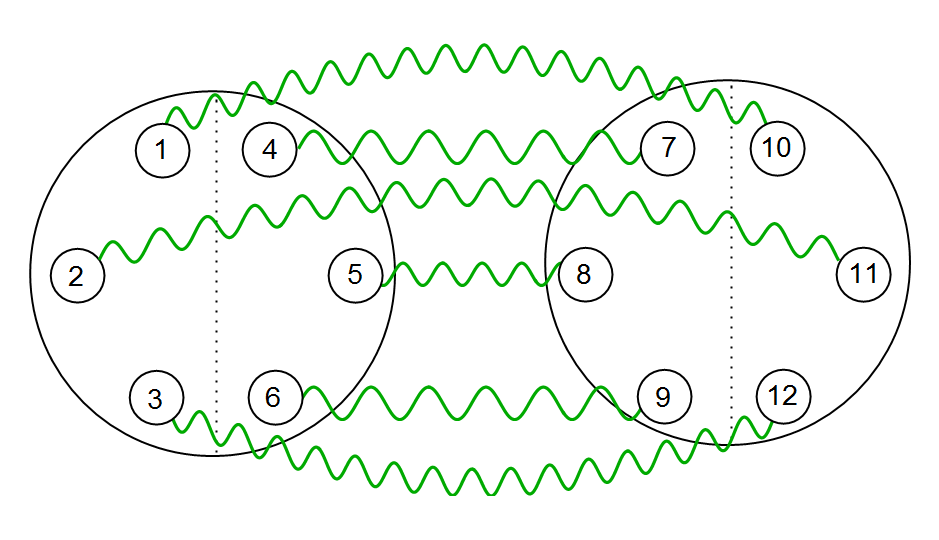}
\caption{\caphead{Symmetric ``six singlets shared'' configuration
of two Posner molecules:} 
Reprinted from~\cite{NYH_19_Quantum}, with permission from Elsevier.}
\label{fig_entangledBindingSixSinglets}
\end{figure}
\subsubsection{Four facts that underlie the Posner Bell test}
\label{sec_Micro_Posner_4_Facts}

First, all the phosphorus nuclear spins in the state $\ket{\psi^\ent_{AB}}$ 
are in copies of $\ket{\Sing}$.
Each phosphate in Posner $A$ is entangled with
the phosphate at the corresponding position in Posner $B$.
As shown in Supplementary Note~\ref{app_CHSH_Backgrnd},
\begin{align}
    [ R_z(\theta_1)  \otimes  R_z(-\theta_2) ]  \ket{\Sing}
    = [ R_z(\theta_1 + \theta_2)  \otimes  \id ]  \ket{\Sing}.
\end{align}
The Posner analog (operation~\ref{item:Rotate Tau} in Supplementary Note~\ref{sec_Ops_For_Posner_Test}) has the form
\begin{align} \label{eq:movingZrotationsOnPos}
    [ \RPos(\theta_1)_A  \otimes  \RPos(-\theta_2)_B ]  
    \ket{\psi^\ent_{AB}} 
    =  \RPos(\theta_1 + \theta_2)_A    \ket{\psi^\ent_{AB}}.
\end{align}

Second, $\ket{\psi^\ent_{AB}}$ is a superposition of states 
in which the Posners' $\tau$ values sum to zero.
This fact was first pointed out in~\cite{NYH_19_Quantum}:
Consider Posners that occupy the state $\ket{\Psi^\ent_{AB}}$
and the prebinding orientation.
The Posners were observed to have
a unit probability of binding (under the assumptions of Fisher's model).

Third, suppose that Posners $A$ and $B$ occupy the state
$\ket{\psi^\ent_{AB}}$,
while Posners $A'$ and $B'$ occupy the state
$\ket{\psi^\ent_{A'B'}}$.
Suppose that $A$ assumes the prebinding orientation with $A'$
and that $B$ assumes the prebinding orientation with $B'$.
A pair's binding is represented with a bit value 0,
and a pair's not binding is represented with a 1.
The two pairs' bits have even parity in two cases,
if both pairs bind or both pairs fail to bind.
If the bits have even parity, the four-Posner state is projected with
\begin{align}
    \Pi_\Even 
    = \Pi_{AA'}\Pi_{BB'} + (\id - \Pi_{AA'})(\id - \Pi_{BB'}).
\end{align}

The binding-measurement outcomes have even parity, we claim:
\begin{align}
    \label{eq_Even_Parity}
    \abs{\Pi_\even  
    \ket{\psi^\ent_{AB}}   \ket{\psi^\ent_{A'B'}}  }^2 = 1 .
\end{align}
Either both pairs bind or both fail to bind.
Equation~\eqref{eq_Even_Parity} can be checked computationally~\cite{BeneWatts_19_Code}
and with the following logic:
In $\ket{\psi^\ent_{AB}}$, $\tau_A + \tau_B = 0$.
In $\ket{\psi^\ent_{A'B'}}$, $\tau_{A'} + \tau_{B'} = 0$.
Hence $\tau_{\rm total} = \tau_A + \tau_{A'} + \tau_B + \tau_{B'}  =  0$. 
In contrast, any state in the image of the projector 
$\mathbbm{1} - \Pi_\Even 
= \Pi_{AA'}(\mathbbm{1} - \Pi_{BB'}) + (\mathbbm{1} - \Pi_{AA'})\Pi_{BB'}$ 
has a $\tau_{\rm total} \neq 0$. 

Fourth, we continue to consider the four-Posner state 
$\ket{\psi^\ent_{AB}} \ket{\psi^\ent_{A'B'}}$.
Consider rotating a qubit with $\RPos(\theta_1)_A$.
The state of Posners $A$ and $B$ is rotated out of 
the $\tau_A + \tau_B = 0$ sector.
This claim can be checked via direct calculation~\cite{BeneWatts_19_Code}.
The even-parity-binding probability,
$\abs{\Pi_\Even
\RPos(\theta_1)_A \ket{\psi^\ent_{AB}}
\ket{\psi^\ent_{A'B'}}  }^2$,
decreases as the rotation angle $\theta_1$ grows. 
We solve for this relationship numerically~\cite{BeneWatts_19_Code}
and present the results in Fig.~\ref{fig:EvenBindingProbability}.

\begin{figure}[ht]
    \centering
    \includegraphics[width=.6\textwidth, clip=true]{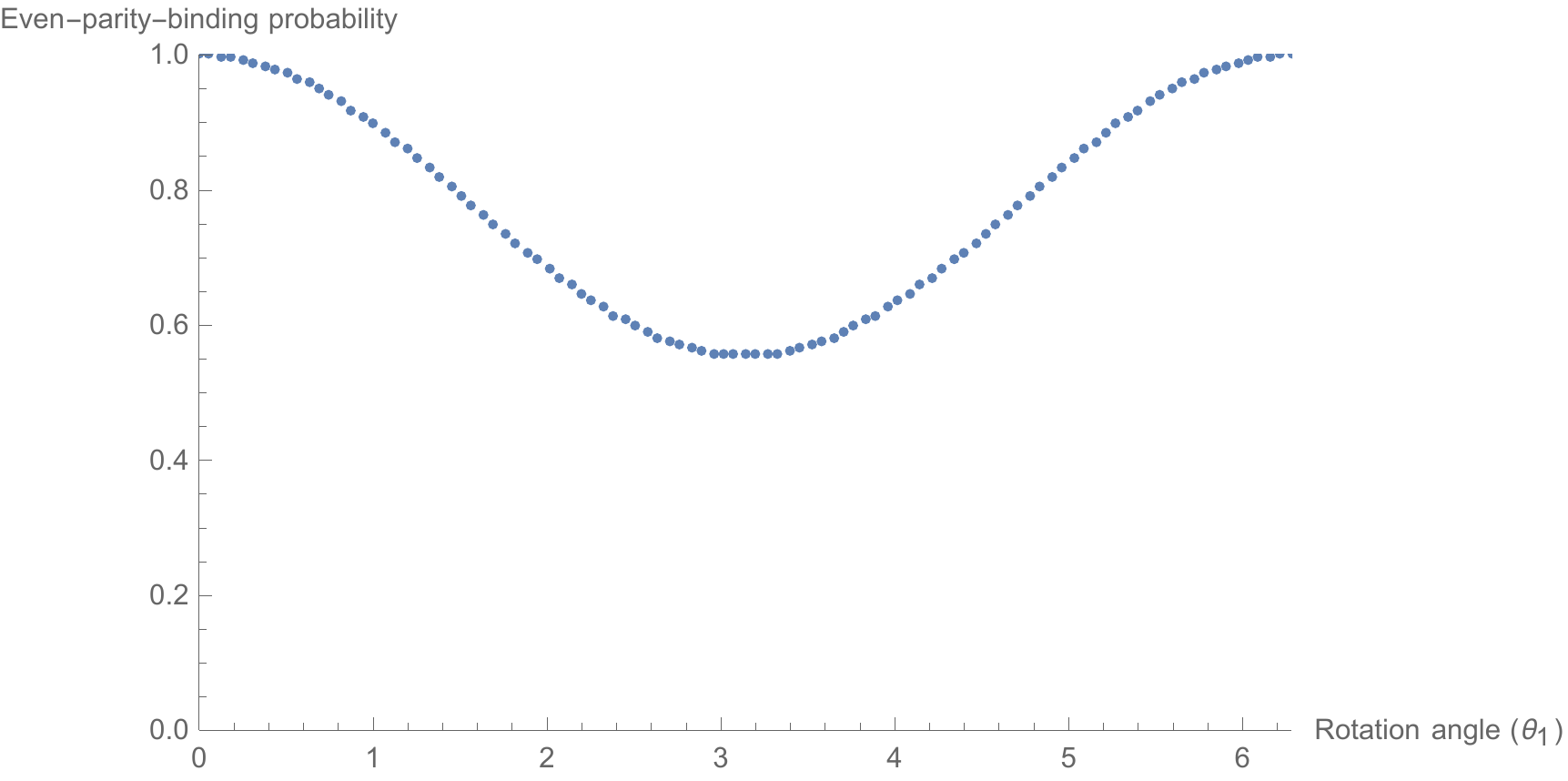}
    \caption{\caphead{Probability of even-parity binding vs. rotation angle:}
    Probability 
    $\left\lvert \Pi_\Even
\RPos(\theta_1)_A \ket{\psi^\ent_{AB}}
\ket{\psi^\ent_{A'B'}}  \right\rvert^2$
that Posner $A$ binds to Posner $A'$ while $B$ binds to $B'$,
after a qubit in $A$ is rotated through an angle $\theta_1$.}
    \label{fig:EvenBindingProbability}
\end{figure}

\subsection{Nonlocal game for a system of few Posners}
\label{subsec:PosnerTest}

We define the game in Sec.~\ref{sec_Micro_Posner_Procedure}.
The game is adapted from the original CHSH game
(Supplementary Note~\ref{app_CHSH_Backgrnd}):
As in the original game, the Posner game's players 
use singlets and rotations.
But Posner-holding players cannot measure $\sigma_x \otimes \sigma_x$.
So they share 12 singlets, rather than one,
and perform Posner-binding measurements,
rather than measuring Pauli operators.
Section~\ref{sec_Micro_Posner_Analyze} shows that
Posners can win the game with a superclassical probability,
in principle, if Fisher's model is correct.
This result is not obvious: The quantum operations 
undergone by Posners in Fisher's model are nonstandard
and might not contain a universal gate set~\cite{NYH_19_Quantum}.
Our proof rests, however, on fine control over the operations
conjectured to be implementable.

\subsubsection{Procedure for the Posner Bell test}
\label{sec_Micro_Posner_Procedure}
 
We consider two experimentalists, Alice and Bob.
Each can perform the operations described in Supplementary Note~\ref{subsec:PosnerPreliminaries}. 
Before the test begins, the players prepare two copies of 
the two Posner-state $\ket{\psi^\ent}$.
Posner $A$ shares entanglement with Posner $B$,
and $A'$ shares entanglement with $B'$.
Alice takes $A$ and $A'$, while Bob takes $B$ and $B'$.
Alice and Bob separate, and each experimentalist receives a question from a verifier. The experimentalists aim to produce responses that win the CHSH game 
(Supplementary Note~\ref{app_CHSH_Backgrnd}) with a superclassical probability. 

Alice and Bob's procedures are almost identical, but we describe Alice's approach first. If Alice receives the question $0$, she rotates Posner $A$ with
$\RPos(-\pi/8)_A$. If she receives a $1$, she rotates with $\RPos(3\pi/8)_A$. Afterward, she performs the Posner-binding measurement between Posners $A$ and $A'$. If the Posners bind, she sends the verifier a 1. If the Posners fail to bind, she sends a 0.

Bob's procedure is similar. 
However, he rotates Posner $B$ with $\RPos(\pi/8)_B$, 
given the question $y = 0$,
and rotates with $\RPos(-3\pi/8)_B$, given $y = 1$.
His binding measurement is on Posners $B$ and $B'$.

\subsubsection{Analysis of the Bell test for a few Posners}
\label{sec_Micro_Posner_Analyze}

If Posners share entanglement, we now show, 
they can outperform classical resources in the CHSH game, in principle,
under Fisher's assumptions.

\begin{claim}
Alice and Bob can win the nonlocal game for a few Posners 
with a superclassical probability of $\geq 79.5\%$,
given fine control over the operations implementable in principle in Fisher's model.
\end{claim}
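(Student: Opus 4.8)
The plan is to reduce the two-Posner-per-player binding statistics to a single-parameter even-parity-binding probability, exactly as the ideal CHSH strategy of Supplementary Note~\ref{app_CHSH_Backgrnd} reduces to the function $\sin^2(\theta/2)$. First I would track the joint state of Supplementary Note~\ref{sec_Micro_Posner_Procedure} through the protocol. The players begin in $\ket{\psi^\ent_{AB}}\ket{\psi^\ent_{A'B'}}$; on questions $x$ and $y$, Alice applies $\RPos(\alpha_x)_A$ and Bob applies $\RPos(\beta_y)_B$ with $(\alpha_0,\alpha_1) = (-\pi/8,\,3\pi/8)$ and $(\beta_0,\beta_1) = (\pi/8,\,-3\pi/8)$. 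Because both rotations act only on the $\ket{\psi^\ent_{AB}}$ factor, the Posner composition identity~\eqref{eq:movingZrotationsOnPos} collapses them into a single rotation on $A$,
\begin{align}
[\RPos(\alpha_x)_A \otimes \RPos(\beta_y)_B]\,\ket{\psi^\ent_{AB}}
= \RPos(\alpha_x - \beta_y)_A\,\ket{\psi^\ent_{AB}},
\end{align}
so the post-rotation four-Posner state equals $\RPos(\Theta_{xy})_A \ket{\psi^\ent_{AB}}\ket{\psi^\ent_{A'B'}}$ with effective angle $\Theta_{xy} := \alpha_x - \beta_y$.

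Second I would evaluate $\Theta_{xy}$ on the four question pairs: $\Theta_{00} = -\pi/4$, $\Theta_{01} = \Theta_{10} = \pi/4$, and $\Theta_{11} = 3\pi/4$, so that $\abs{\Theta_{xy}} = \pi/4$ whenever $x \wedge y = 0$ and $\abs{\Theta_{xy}} = 3\pi/4$ when $x \wedge y = 1$. Since Alice reports $1$ iff $(A,A')$ bind and Bob reports $1$ iff $(B,B')$ bind, the response pair $ab$ has even parity precisely when the two pairs bind with even parity. Writing $P_\even(\theta) := \abs{\Pi_\even\,\RPos(\theta)_A \ket{\psi^\ent_{AB}}\ket{\psi^\ent_{A'B'}}}^2$ for the even-parity-binding probability of Fig.~\ref{fig:EvenBindingProbability} (with $P_\even(0)=1$ by Eq.~\eqref{eq_Even_Parity}), the CHSH table (Table~\ref{table_CHSH}) fixes the round-win probabilities: on the three pairs with $x \wedge y = 0$ the players need even parity and win with probability $P_\even(\pi/4)$, while on $xy = 11$ they need odd parity and win with probability $1 - P_\even(3\pi/4)$. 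Here I use that $P_\even$ depends only on $\abs{\Theta}$, which I would confirm from the reflection symmetry of the Posner geometry under the $z$-rotation generator.

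Third I would read the two numbers $P_\even(\pi/4)$ and $P_\even(3\pi/4)$ off the numerically computed curve of Fig.~\ref{fig:EvenBindingProbability}~\cite{BeneWatts_19_Code} and average the four round-win probabilities over uniform questions,
\begin{align}
\frac{1}{4}\Big[\,3\,P_\even(\pi/4) + 1 - P_\even(3\pi/4)\,\Big] \geq 0.795,
\end{align}
which exceeds the classical value $3/4$ established in \Cref{thm: max classical value for CHSH}, proving the superclassical bound.

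The main obstacle is not the structural reduction but the quantitative input. Unlike the ideal singlet case of Supplementary Note~\ref{app_CHSH_Backgrnd}, where $P_\even(\theta)$ equals the clean $\cos^2(\theta/2)$ and the average saturates Tsirelson's $\sin^2(3\pi/8) \approx 0.854$, the Posner binding probability $P_\even$ is a flatter, non-analytic function available only numerically. The crux is therefore to certify that the computed values at $\pi/4$ and $3\pi/4$ are sufficiently separated from $1/2$ that the weighted average clears $0.795$; this shallowness of $P_\even$ is exactly what places the achievable Posner win probability below the qubit Tsirelson bound while still beating $3/4$.
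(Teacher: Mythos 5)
Your proposal is correct and follows essentially the same route as the paper's proof: both collapse Alice's and Bob's rotations into a single effective rotation $\RPos(\theta_1-\theta_2)_A$ via Eq.~\eqref{eq:movingZrotationsOnPos}, evaluate the effective angles $\mp\pi/4$ and $3\pi/4$ on the four question pairs, invoke the numerically computed even-parity-binding probabilities~\cite{BeneWatts_19_Code} ($0.934$ at $\pm\pi/4$, $0.620$ at $3\pi/4$), and average to get $\tfrac{3}{4}(0.934)+\tfrac{1}{4}(1-0.620)>0.795$. Your added remark that $P_\even$ depends only on $\abs{\Theta_{xy}}$ is consistent with the paper's table, which records the same value $0.934$ at both $-\pi/4$ and $+\pi/4$.
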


\begin{proof}
Assume that Alice rotates Posner $A$ with an angle $\theta_1$, 
while Bob rotates $B$ with an angle $\theta_2$. By the test's construction, the experimentalists' probability of sending the verifier an even-parity response pair is
\begin{align}
    \abs{\Pi_\Even
    \left[ \RPos(\theta_1)_A  \otimes  \RPos(\theta_2)_B \right]
    \ket{\psi^\ent_{AB}} \ket{\psi^\ent_{A'B'}} }^2 
    = \abs{\Pi_\Even
    \left[ \RPos(\theta_1 - \theta_2)_A  \right]
    \ket{\psi^\ent_{AB}}  \ket{\psi^\ent_{A'B'}} }^2.
\end{align}
The equality follows from Eq.~\eqref{eq:movingZrotationsOnPos}. 
We evaluate this probability for all possible question pairs
in Table~\ref{tab:EvenBindingProbs}~\cite{BeneWatts_19_Code}.

\begin{table}[ht]
    \centering
    \begin{tabular}{c|c|c}
        Questions 
        &$\theta_1 - \theta_2$ 
        & Even-parity-response probability\\ \hline
        00 &$-\pi/4$  & 0.934 \\
        01 & $\pi/4$  & 0.934  \\
        10 & $\pi/4$  & 0.934  \\
        11 & $3\pi/4$ & 0.620 
    \end{tabular}
    \caption{The probability that players provide 
    an even-parity response pair $ab$
    to the possible question pair $xy$ 
    in the few-Posner CHSH game.
    The first bit in the ``questions'' column labels the question $x$ sent to Alice.
    The second bit labels the question $y$ sent to Bob.}
    \label{tab:EvenBindingProbs}
\end{table}

In each round of the CHSH game, 
the verifier selects the question pair uniformly randomly.
The winning response pairs for the question pairs $00$, $01$, and $10$ 
have even parity.
The winning response pair for the question pair $11$ has odd parity. 
Therefore, the overall win probability for the few-Posner game is
\begin{align}
    \frac{3}{4} \times 0.934 + \frac{1}{4} \times (1-0.620) > 0.795. 
\end{align}
\end{proof}
The angles used by Alice and Bob (Sec.~\ref{sec_Micro_Posner_Procedure})
differ from the angles used in the conventional CHSH game
(Supplementary Note~\ref{app_CHSH_Backgrnd}).
The reason is, in the conventional game, Alice and Bob share a singlet, 
$\ket{ \Sing }  :=  \frac{1}{ \sqrt{2} } ( \ket{01} - \ket{10} )$.
In the Posner game, Alice and Bob share entangled a pair of Posners.
For the game's purposes, the Posners' $\tau$ degrees of freedom
resemble qubits in the maximally entangled state
$\ket{ \Phi^+ }  :=  \frac{1}{ \sqrt{2} } ( \ket{00}  +  \ket{11} )$.
If Alice and Bob shared a copy of $\ket{ \Phi^+ }$
in the conventional CHSH game,
they would use the rotations in Supplementary Note~\ref{sec_Micro_Posner_Procedure}.
This small-scale Posner Bell test informs our macroscopic Posner Bell test, introduced next.

\subsection{Sketch of macroscopic Posner Bell test}
\label{sec_Posner_Test_Actual}

We envision a three-dimensional tank of aqueous fluid.
A glass plate coincides with the $xy$-plane.
Along the $x$-axis is a slit in which pyrophosphatase enzymes are lodged.
Diphosphate ions are poured above the $xy$-plane
and waft downward on currents propelled from above.

The diphosphates can traverse the plane only through the enzymes.
Suppose that an enzyme hydrolyzes a diphosphate,
breaking the diphosphate into two phosphates (PO$_4^{3-}$).
The phosphorus ($^{31}$P) nuclear spins form a singlet,
$\frac{1}{ \sqrt{2} } ( \ket{ \uparrow \downarrow } - \ket{ \downarrow \uparrow } )$,
according to the dynamical selection rule
posited by Fisher and Radzihovsky~\cite{Fisher_15_Quantum,Fisher_18_Quantum}.
We assume that, during the hydrolyzation and release,
the enzyme changes shape,
such that the diphosphate can enter the fluid below the $xy$-plane.

Below the $xy$-plane, two currents sweep fluid away from the $x$-axis.
One current sweeps toward the $+y$-axis, where Alice collects the fluid.
The other current sweeps toward the $-y$-axis, where Bob collects the fluid.
Alice and Bob share singlets.
Alice and Bob add calcium ions, Ca$^{2+}$, to their fluids.
Posners form.
Each players divides his/her fluid among 3 test tubes.
Alice holds test tubes $A_1$, $A_2$, and $A_3$.
Bob holds test tubes $B_1$, $B_2$, and $B_3$.

A verifier sends a question $x = 0, 1$ to Alice and a question $y = 0, 1$ to Bob
(Supplementary Note~\ref{app_Nonlocal_Game}).
Alice and Bob agreed, before collecting the Posners,
to follow the strategy for winning the few-Posner Bell test
with a superclassical probability (Supplementary Note~\ref{subsec:PosnerTest}): 
Upon receiving $x = 0$, Alice applies a magnetic field to test tube 1. The field implements the rotation $R_z(-\pi/8)$ on all the Posners in the test tube. 
If Alice receives $x = 1$, she applies a field that implements $R_z(3\pi/8)$. 
Bob's strategy is similar but involves opposite signs:
Given $y = 0$, he applies a magnetic field that implements $R_z(\pi/8)$ on all the qubits in his 1 test tube. Given $y = 1$, he applies $R_z(-3\pi/8)$. 

Each player lowers the pH in test tubes $1$ and $2$.
The H$^+$ ions outcompete positively charged Ca$^{2+}$ ions
in binding to the negatively charge phosphorus ions (PO$_4^{3-}$).
The protons hydrolyze the Posners.
Each player mixes his/her $1$ and $2$ test tubes, then raises the pH.
New Posners form. 
Some Posners contain phosphorus nuclei whose spins are rotated
relative to the spins of the Posner's other phosphorus nuclei.

Each player mixes his/her combined $1$ and $2$ test tubes with test tube $3$,
whose qubits have not been rotated. 
Then, each player adds calcium indicators to the test tube
and lowers the pH.
Posners are hoped to approach each other
in the prebinding orientation.
In Fisher's model, some fraction of these Posners will bind.
The fraction depends on the entanglement
and on the binding of Posners in the other player's test tube.
The bound-together Posners move slowly,
forming easy targets for H$^+$ ions.
The ions hydrolyze the bound-together Posners, flooding the test tubes with calcium.
The calcium binds to the calcium indicators, which fluoresce.

Alice and Bob measure the fluorescence's intensity. After completing many trials, 
they compute the covariances between their intensities, 
then estimate the macroscopic Bell parameter
[Eq.~\eqref{eq_Bell_Param}]. 
A superclassical value (Sec.~\ref{thm_marco_bell_ineq}) 
certifies entanglement between Posners,
if the experiment satisfies the assumptions in Sec.~\ref{sec_Setup}.
The experiment sketched here likely does not satisfy the assumptions.
We delineate reasons, and opportunities for improving the sketch,
in the next section.

\subsection{Analysis of sketch of Posner Bell test}
\label{sec_Posner_Analysis}

Much work remains to be done 
to shore up the Posner Bell test theoretically
and to ensure its experimental feasibility.
First, the macroscopic Bell inequality needs extending. 
Theorem~\ref{thm_marco_bell_ineq} relies on
each particle's interacting with, at most, one other particle.
Each Posner can share entanglement with
up to six other Posners.
The extension from one to six requires a change to the inequality
but maintains interactions' locality.

Second, much could go awry during an implementation of 
the protocol in Sec.~\ref{sec_Posner_Test_Actual}.
A not-necessarily-complete list of loopholes include the following:
(i) Enzymes might release separated phosphates 
into the fluid above the glass plane.
(ii) A current could sweep two entangled-together phosphates
toward the same test tube.
(iii) The phosphates will assume random locations in the Posners.
The ``six singlets shared'' states 
(Fig.~\ref{fig_entangledBindingSixSinglets})
might form rarely.
(iv) The entangled phosphates spend time outside
the Posners conjectured to protect coherence.
The phosphorus nuclear spins might decohere 
before Alice and Bob can complete their trial.
Time scales must be estimated and compared, as in
Section~3.8 and App.~K of~\cite{NYH_19_Quantum}. 
(v) Posners in the same test tube might have a low probability
of assuming the prebinding orientation.
The overall binding rate might therefore be too low.
(vi) H$^+$ ions can hydrolyze not only bound-together Posners,
but also individual Posners.
Individual Posners' hydrolyzation will add noise to
the intensity measurements.
\end{appendices}

%
%
\bibliographystyle{apsrev4-1}
\bibliography{Antman_Bib}

\end{document}